\def\BibTeX{{\rm B\kern-.05em{\sc i\kern-.025em b}\kern-.08em
    T\kern-.1667em\lower.7ex\hbox{E}\kern-.125emX}}
\theoremstyle{definition}
\newtheorem{thm}{Theorem}
\newtheorem{rem}{Remark}
\newtheorem{example}{Example}
\newtheorem{definition}{Definition}
\begin{document}
\title{Local to Global: A Distributed Quantum Approximate Optimization Algorithm for Pseudo-Boolean Optimization Problems
}
\author{\IEEEauthorblockN{Bo Yue,
Shibei Xue~\IEEEmembership{Senior Member,~IEEE}, Yu Pan~\IEEEmembership{Senior Member,~IEEE}, Min Jiang, Daoyi Dong~\IEEEmembership{Fellow,~IEEE}}
\thanks{This work was supported by the National Science Foundation of China under Grant 62273226, Grant 62173296 and Grant 61873162. ({\it Corresponding author: Shibei Xue.})}
\thanks{Bo Yue and Shibei Xue are with the Department of Automation, Shanghai Jiao Tong University and the Key Laboratory of System Control and Information Processing, Ministry of Education of China, Shanghai, 200240, P. R. China (E-mails: \{yuebo2017JD, shbxue\}@sjtu.edu.cn).}
\thanks{Yu Pan is with Institute of Cyber-Systems and Control,
College of Control Science and Engineering, Zhejiang University, Hangzhou 310027, P. R. China (E-mail: ypan@zju.edu.cn).}
\thanks{Min Jiang is with School of Electronics and Information Engineering, Soochow University, Suzhou 215006, P. R. China (E-mail: jiangmin08@suda.edu.cn).}
\thanks{Daoyi Dong is with the School of Engineering, Australian National University, Canberra ACT 2601, Australia. (E-mail: daoyi.dong@anu.edu.au).}}

\maketitle

\begin{abstract}
With the rapid advancement of quantum computing, Quantum Approximate Optimization Algorithm (QAOA) is considered as a promising candidate to demonstrate quantum supremacy, which exponentially solves a class of Quadratic Unconstrained Binary Optimization (QUBO) problems.
However, limited qubit availability and restricted coherence time challenge QAOA to solve large-scale pseudo-Boolean problems on currently available Near-term Intermediate Scale Quantum (NISQ) devices. In this paper, we propose a distributed QAOA which can solve a general pseudo-Boolean problem by converting it to a simplified Ising model. Different from existing distributed QAOAs' assuming that local solutions are part of a global one, which is not often the case, we introduce community detection using Louvian algorithm to partition the graph where subgraphs are further compressed by community representation and merged into a higher level subgraph. Recursively and backwards, local solutions of lower level subgraphs are updated by heuristics from solutions of higher level subgraphs. Compared with existing methods, our algorithm
incorporates global heuristics into local solutions such that our algorithm is proven to achieve a higher approximation ratio and outperforms across different graph configurations. Also, ablation studies validate the effectiveness of each component in our method.
\end{abstract}

\begin{IEEEkeywords}
Quantum approximate optimization algorithm, distributed algorithm, pseudo-Boolean optimization.
\end{IEEEkeywords}

\section{Introduction}

Quantum Computing (QC) has emerged as a new computational paradigm that holds the potential of exponential speedups over classical counterparts in certain computationally-intensive tasks. This is primarily attributed to the vast state space of qubits and the inherent quantum parallelism.
Earliest achievements include Shor's algorithm for exponential speedup of large integer factorization~\cite{b0,b1} and  Grover's algorithm for quadratic acceleration of unstructured database searching~\cite{b2}. Recent works include reduced computational time for chemistry~\cite{b3,b4,b5}, machine learning~\cite{b-6,10081,10083}, finance\cite{b-10,b-11} and other fields\cite{b-8,10082}. However, these quantum algorithms rely on scalable, fault-tolerant, universal quantum computers, which are not available at present.

Current quantum devices are not yet advanced enough for fault-tolerance and typically consist of a moderate number of noisy qubits. Consequently, this current stage of QC is referred to as Noisy Intermediate-Scale Quantum (NISQ)~\cite{b---1} era.
In this era, Quantum Approximate Optimization Algorithm (QAOA)~\cite{b11} becomes a leading candidate to demonstrate quantum supremacy. This expectation is grounded in the fact that QAOA requires relatively shallow circuits, making it suitable for non-error corrected devices. Additionally, the variational nature of QAOA helps mitigate the impact of systematic errors~\cite{b-12,10084}. However, the state-of-the-art quantum computers can only provide a limited number of qubits, which is approximately one hundred~\cite{b-15}. Hence, limited qubit count and restricted coherence time pose challenges for QAOA to outperform classical solvers, when confronted with large-scale pseudo-Boolean optimization problems~\cite{b-13,b-14}.

Given the current circumstances, distributed QAOA becomes a viable alternative, which can be classified into circuit-level and graph-level algorithms. For the former kind of algorithm, CutQC cuts large quantum circuits into small subcircuits to be executed on smaller quantum devices, and utilizes classical postprocessing to reconstruct the output of the original circuit~\cite{b-16}. Similarly, Ref.~\cite{b-17} proposed a cluster simulation scheme that decomposes the overall circuit into clusters of bounded size and limited inter-cluster interactions. Because QAOA involves an unknown depth of iterations and pseudo-Boolean optimization problems have distinctive mathematical expressions, graph-level algorithms, based on the divide-and-conquer principle, are preferred. Ref.~\cite{b-18} partitions the original graph into several subgraphs that share common nodes, solves the subproblems on subgraphs with QAOA and then combines local solutions. However, the sample complexity grows with the number of common nodes, which is far from feasible for large-scale instances with dense graphs. To solve complicated graph instances, Ref.~\cite{b-15} proposed QAOA-in-QAOA, which views local solutions as a node, and casts the merging process into a new Max-Cut problem to be addressed by QAOA.
Nonetheless, existing methods mainly focus on Max-Cut problems and usually is not applicable to more diverse and complicated problems, e.g., pseudo-Boolean problems. Furthermore, the performance of distributed QAOA may degrade, because current methods fix local solutions at the interconnections between subgraphs. This fixation, or naive combination, can lead to deviation from the global optimum, especially when interconnections involve weighted edges or the subgraphs are not sufficiently dense..


In this paper, we extend the application of QAOA to address general pseudo-Boolean problems by transforming them into Quadratic Unconstrained Binary Optimization (QUBO) problems. By partitioning the original graph for the obtained QUBO problem, we propose a distributed QAOA employing Louvain algorithm as a community detection technique, which ensures dense structures within subgraphs. Additionally, we introduce community representation, which compresses lower level subgraphs and then merges them into higher level subgraphs. In this process, we refine the merging process to accommodate the limited qubit count. After recursively implementing this process, local solutions of lower level subgraphs are updated using solutions from higher level subgraphs to obtain the final global solution in a backward manner. Since the workflow of distributed QAOA is modularized, our method is also at ease of integration under future advancements.

The paper is organized as follows. Section~\ref{pre} briefly introduces the formulation of pseudo-Boolean problems. Section~\ref{2} explains how to convert a general pseudo-Boolean problem to a simplified Ising model. The distributed QAOA is presented in detail in Section~\ref{3}. Simulations and analysis are discussed in Section~\ref{4}. In the same section, an example on cubic knapsack problem shows the entire procedure and effectiveness of our method. Finally, Section~\ref{5} concludes the paper and suggests future research directions.

\section{Pseudo-Boolean Problem}\label{pre}
A pseudo-Boolean problem refers to an optimization problem where the objective function and constraints are expressed using pseudo-Boolean functions.
Pseudo-Boolean problems have various applications in different fields such as computer science, operations research, artificial intelligence, and economics, and are used to model a wide range of real-world problems, including constraint satisfaction~\cite{b01}, combinatorial optimization~\cite{b02}, scheduling~\cite{b03}, and resource allocation~\cite{b04,b06}.

Mathematically, in a pseudo-Boolean problem, we seek to find the minimum or maximum of a pseudo-Boolean function~\cite{b9}
\begin{equation}\label{multilinear}
    {f}_0(x_1,x_2,\ldots,x_K) = \sum_{{\mathcal{S}_0^l}\in 2^{[{K}]}}{c_l}\prod_{i\in {\mathcal{S}_0^l}}x_i.
\end{equation}
Here, ${f}_0:\mathbb{B}^K\rightarrow\mathbb{R}$ is a pseudo-Boolean function in a multilinear polynomial form, where $\mathbb{B}=\{0,1\}$ is a Boolean domain, and ${K}\in\mathbb{Z}^+ $ is called the arity of the function.
Also, $[{K}]$ is a set of natural numbers from $1$ to ${K}$; i.e., $[{K}] = \{1,2,\ldots,{K}\}$ and the power set $2^{[{K}]}$ includes all suitable subsets of $[{K}]$; i.e., $2^{[{K}]} = \{{\mathcal{S}_0^l}\,|\,{\mathcal{S}_0^l}\subseteq[{K}],\, l\in[L]\}$, where $l$ labels the subsets and $L$ indicates the total number of the subsets. In addition, a Boolean variable $x_i$ takes the value of either $0$ or $1$; i.e., $x_i\in\mathbb{B}$, where the number of the index $i$ is determined by $S_0^l$, the size of ${\mathcal{S}_0^l}$, and ${c_l}\in \mathbb{R}$ are the corresponding coefficient.
Note that since maximizing a function is equivalent to minimizing the negative counterpart of the function, a pseudo-Boolean problem can be considered as the problem of minimizing the function ${f}_0$.

Besides the objective function, a feasible domain should be considered in a pseudo-Boolean problem, which is shaped by both equality and inequality constraints. Generally, an inequality constraint can be expressed as
${g}_0(x_1,x_2,\ldots,x_K)\leq0$, where ${g}_0:\mathbb{B}^K\rightarrow\mathbb{R}$ is also a pseudo-Boolean function. We can always convert the inequality constraint to an equality constraint $g(x_1,x_2,\ldots,x_K,x_{K+1},\cdots,x_N)=0$ by introducing $N-K$ slack variables; i.e., $x_{K+1},\cdots,x_N$.
Consequently, a general constrained pseudo-Boolean problem is formulated as,
\begin{eqnarray}\label{cons}
  \min_{x_1,\cdots,x_N} &&{f}_0(x_1,x_2,\ldots,x_K)\nonumber\\
  s.t.\,\,\,\,\,&&g_1(x_1,x_2,\ldots,x_N)=0,\nonumber\\
  &&~~~~~~~~~\vdots\\
  &&g_W(x_1,x_2,\ldots,x_N)=0\nonumber
\end{eqnarray}
where $W$ is the total number of constraints, and $g_i(x_1,x_2,\ldots, x_N),i\in[W]$ is also a multilinear polynomial.

The above constrained problem can be converted to an unconstrained problem by a penalty method which adds the constrains to the objective function as penalty terms~\cite{b07}; i.e.,
\begin{equation}\label{gpb}
\begin{aligned}
\min f=\,\,&{f}_0(x_1,x_2,\ldots,x_K)+\mu\sum_{w=1}^Wg_w^2(x_1,x_2,\ldots,x_N)
\end{aligned}
\end{equation}
with a sufficiently large positive number $\mu\in\mathbb{R}^+$.
Since both ${f}_0$ and $g_w, w\in[W]$ are multilinear polynomials and any power of a Boolean variable equals to itself, $f$ is also a multilinear polynomial, which can also expressed as
\begin{equation}\label{fff}
\begin{aligned}
f(x_1,x_2,\cdots,x_N)=\,&\sum_{\mathcal{S}^l\in 2^{[N]}}d_l\prod_{i\in \mathcal{S}^l}x_i.
\end{aligned}
\end{equation}
Here, due to the slack variables, $\mathcal{S}^l$ is a suitable subset of $[N]$ with a size $S^l$ and we use $d_l$ to denote the corresponding coefficients.
\section{Reduction of a Pseudo-Boolean Problem into a Simplified Ising Model}\label{2}
In the general pseudo-Boolean problem, the objective $f$ in Eq.~(\ref{fff}) would contain high-order terms of Boolean variables. However, due to hardware constraints in NISQ devices; i.e., high-order couplings are quite weak, most problems to be solved are considered in the form of Ising model~\cite{b7}. In this section, we propose a pipeline of reducing a general pseudo-Boolean problem into a simplified Ising model. 
Concretely, Subsection \ref{pipe1} simplifies the objective function $f$ by eliminating uncoupled variables. Subsection~\ref{pipe2} reduces the simplified problem into an Ising model via quadratization and mapping from a QUBO problem to an Ising model. Subsection~\ref{pipe4} simplifies the Ising model in terms of variables with one dependency.
\subsection{Elimination of Uncoupled Variables}\label{pipe1}
In the objective $f$, some variables take values independent of any other variables, so the values of these variables can be determined as $0$ or $1$ in advance so as to simplify the objective $f$. These variables refer to uncoupled variables.
\begin{definition}[Uncoupled Variable] In a multilinear polynomial $f(x_1,\ldots,x_N)$, a Boolean variable $x_i,i\in[N]$ is said to be an uncoupled variable if and only if
\begin{equation}
    f(1,\ldots,x_i=1,\ldots,1)-f(1,\ldots,x_i=0,\ldots,1)=1,\label{def1}
\end{equation}
when we let $d_l=1,\forall\, l\in [L]$.
\end{definition}

\begin{thm}
Let $x_i$ be an uncoupled variable in $f(x_1,\ldots,x_N)$, and the coefficient of the term it belongs to be $d_{l}$, we have $\min f(x_1,\ldots,x_N)=\min f(x_1,\ldots,x_i=-\frac{d_{l}}{2|d_{l}|}+\frac{1}{2},\ldots,x_N)$.
\end{thm}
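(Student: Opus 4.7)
The plan is to unpack the definition of an uncoupled variable, split $f$ into parts that do and do not depend on $x_i$, and then reduce the optimization over $x_i$ to a one-variable case analysis whose minimum is achieved by the prescribed value. First I would observe that, with all coefficients $d_l$ set to $1$ in representation~(\ref{fff}), evaluating at $x_j=1$ for every $j\neq i$ makes every monomial equal $1$. Consequently the difference in Eq.~(\ref{def1}) counts exactly the number of monomials of $f$ that contain the factor $x_i$, and the uncoupled condition forces that number to be $1$. Hence $x_i$ appears in a unique monomial of $f$, say with index set $\mathcal{S}^l$ and coefficient $d_l$.

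Next I would isolate the dependence on $x_i$ by writing
\begin{equation*}
f(x_1,\dots,x_N) \;=\; d_l\, x_i\, P(x_{-i}) \;+\; R(x_{-i}),
\end{equation*}
where $P(x_{-i})=\prod_{j\in\mathcal{S}^l\setminus\{i\}} x_j\in\{0,1\}$ collects the remaining factors of the unique monomial containing $x_i$, and $R(x_{-i})$ gathers every other monomial, none of which involves $x_i$. For any fixed assignment of the other variables, minimizing $f$ over $x_i\in\{0,1\}$ is equivalent to minimizing the affine function $d_l P(x_{-i})\,x_i$.

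A two-case analysis then finishes the argument. When $P(x_{-i})=0$ the value of $x_i$ is irrelevant; when $P(x_{-i})=1$ the unique minimizer is $x_i=0$ for $d_l>0$ and $x_i=1$ for $d_l<0$. A direct substitution shows that $-\tfrac{d_l}{2|d_l|}+\tfrac{1}{2}$ equals $0$ in the first case and $1$ in the second, so this value is always an optimal choice of $x_i$. Because this choice is optimal for \emph{every} assignment of $x_{-i}$, substituting it into $f$ preserves the global minimum, which is exactly the claimed equality.

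The main subtlety I expect is the correct reading of the definition itself. The all-ones evaluation counts the \emph{multiplicity} of $x_i$ across monomials rather than the arity of the single monomial it sits in, so an uncoupled variable may still multiply other variables inside that monomial. It is precisely the residual factor $P(x_{-i})$ which triggers the case split, and once that is identified the substitution formula $-d_l/(2|d_l|)+1/2$ is merely an algebraic shorthand for ``set $x_i$ equal to $0$ or $1$ according to the sign of $d_l$.''
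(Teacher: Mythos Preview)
Your proposal is correct and follows essentially the same approach as the paper's proof: both first argue from the definition that $x_i$ appears in a single monomial, then perform a sign-of-$d_l$ case analysis on that monomial to determine the optimal value of $x_i$, and finally identify that value with $-\tfrac{d_l}{2|d_l|}+\tfrac{1}{2}$. Your version is slightly more explicit in decomposing $f$ into $x_i$-dependent and $x_i$-independent parts and in stating why pointwise optimality over $x_{-i}$ yields the global equality, but the underlying argument is the same.
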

\begin{proof}
First, Eq. (\ref{def1}) in the definition indicates that $x_i$ exists only in one term of $f$. This is because $f(1,\ldots,x_i=1,\ldots,1)$ equals to the total number of terms in $f(x_1,\ldots,x_N)$, while $f(1,\ldots,x_i=0,\ldots,1)$ equals to the total number of terms that exclude $x_i$. 
Further, the value of the term containing $x_i$; i.e., $d_lx_i\prod_{j\in \mathcal{S}^l \backslash \{i\}}x_j$ is either $0$ or $d_l$ since $x_i\in\mathbb{B},i\in[N]$, where $\mathcal{A}\backslash \mathcal{B}=\{x|x\in \mathcal{A}, x\notin \mathcal{B}\}$ for two sets $\mathcal{A}$ and $\mathcal{B}$. When $d_l>0$, the term with $x_i=0$ is no greater than the term with $x_i=1$, namely $0\leq d_l\prod_{j\in \mathcal{S}^l\backslash \{i\}}x_j$.
Hence, when minimizing $f$, we have 
$x_i=0$ when $d_l>0$. Similarly, if $d_l<0$, the term with $x_i=1$ is no greater than the term with $x_i=0$, namely $d_l\prod_{j\in \mathcal{S}^l}x_j\leq0$. To minimize $f$, we have $x_i=1$ when $d_l<0$. Therefore, we choose $x_i=0$ when $d_l>0$, and $x_i=1$ when $d_l<0$, which can be written in a compact form as  $x_i=-\frac{d_l}{2|d_l|}+\frac{1}{2}$.
\end{proof}
This theorem shows that the value of an uncoupled variable can be determined in advance when we minimize the objective $f$.
We introduce an operation $\xi$ which lets all uncoupled variables equal to $-\frac{d_l}{2|d_l|}+\frac{1}{2}$. Hence, the original problem can be simplified as
\begin{equation}
\begin{aligned}
    &\min \;\;f(x_1,x_2,\ldots,x_N)\\
    \Longleftrightarrow&\min \;\;\xi\circ f(x_1,x_2,\ldots,x_N)\\
    \Longleftrightarrow&\min\;\;\hat{f}(x_{1},x_{2},\ldots,x_{P}),\;P \leq N
\end{aligned}
\end{equation}
where $\hat{f}$ is also a multilinear polynomial and only contains coupled variables. Assuming there are $N-P$ uncoupled variables, this replacement reduces the number of variables from $N$ to $P$, which predigests the objective function $f$. Note that the operation $\xi$ does not introduce new uncoupled variables since the remaining variables do not satisfy Eq.~(\ref{def1}).
\begin{example} Let us consider $f(x_1,x_2,x_3,x_4)=x_1x_4-2x_2x_3+4x_1x_2x_4$. By definition, $x_3$ is an uncoupled variable, so $\min\, \xi\circ f(x_1,x_2,x_3,x_4) = \min\, \hat{f}(x_1,x_2,x_4)$, where we let $x_3=1$ and thus $\hat{f}(x_1,x_2,x_4)=x_1x_4-2x_2+4x_1x_2x_4$.
\end{example}

\subsection{Quadratization}\label{pipe2}

As aforementioned, the objective function Eq.~(\ref{fff}) would contain high-order terms, e.g., cubic terms. In the current state, to execute a QAOA on NISQ devices, it is expected to quadratize these high-order terms. 
The following quadratization procedure is proposed in Ref.~\cite{b10}, where three steps are performed for one iteration.
Concretely, we consider to quadratize a monomial $\prod_{i\in \mathcal{S}^l}x_i$, where the size of $\mathcal{S}^l$ is greater or equal to $3$; i.e., $S^l\geq3$.
In the first step, we select two variables $x_i$ and $x_j,~i\neq j$ in the monomial. 
In the second step, we replace each occurence of $x_ix_j$ in $\hat{f}$ with an auxiliary Boolean variable $y_{ij}\in \mathbb{B}$; i.e., the monomial is written as $y_{ij}\prod_{p\in \mathcal{S}^l\backslash \{i,j\}}x_p$,  and denote the corresponding objective function as $\hat{f}_{ij}$. In the third step, we obtain a new objective $\tilde{f}=\hat{f}_{ij} +\lambda_{ij}(x_ix_j-2x_iy_{ij}-2x_jy_{ij}+3y_{ij})$ with a large positive number $\lambda_{ij}\in\mathbb{R}^+$, which is equivalent to $\hat{f}$ in the sense that $\min \hat{f} = \min \tilde{f}$. It is easy to check this equivalence. Since both $x_i$ and $x_j$ are Boolean variables, we have four combinations of $x_i$ and $x_j$, namely $00$, $01$, $10$, or $11$. Taking them to $\tilde {f}$, we can find the equivalence. 
The above three steps can be
repeated until we have no high-order terms. We denote this quadratized polynomial as $\tilde{f}(x_{1},\ldots,x_{{P}},{y}_1,\ldots,{y}_{U})$, where ${y}_i,i\in[U]$ are reordered auxiliary variables. Note that the auxiliary variables are also Boolean variables, and $\tilde{f}$ is also a multilinear polynomial.
\begin{example}
Consider the quadratization of $\hat{f}=x_1x_2x_3$. i) The product $x_1x_2$ is selected. ii) We replace $x_1x_2$ with an auxiliary variable $y_{12}$ and have $\hat{f}_{12}=y_{12}x_3$. iii) We rewrite the original problem as $\min\, \hat{f} = \min\hat{f}_{12} +\lambda_{12}(x_1x_2-2x_1y_{12}-2x_3y_{12}+3y_{12})$. Consequently, $\tilde{f}(x_1,x_2,x_3,{y_1})={y_1}x_3+\lambda_{12}(x_1x_2-2x_1{y_1}-2x_2{y_1}+3{y_1})$.
\end{example}
With the above procedure, we can convert the original problem to a QUBO model~\cite{b6} which is composed of a quadratic objective function of Boolean variables.
To solve the QUBO model on a quantum device, we should map a QUBO model into an Ising model by changing variables; i.e., $x_\cdot(y_\cdot)=\frac{1}{2}(1-z_\cdot)$.
Since $\tilde{f}(x_{1},\ldots,x_{{P}},{y}_1,\ldots,{y}_{U})$ is quadratized and consists of only Boolean variables, $\tilde{f}$ can be converted into an Ising model $
v(z_1,\ldots,z_{P+U})$, taking the form of

\begin{equation}
    v(z_1,z_2,\ldots,z_{P+U})=\sum_{1\leq i<j\leq P+U}w_{ij}z_iz_j + \sum_{1\leq k \leq P+U}w_kz_k,
    \label{isingeqquasi}
\end{equation}
where we rewrite the corresponding coefficients as $w_\cdot$. Note that with the mapping the domain of the Boolean variables is changed to $\mathbb{\bar B}=\{-1,1\}$; i.e., $z_\cdot\in\mathbb{\bar B}$. Also, since the mapping is linear, Eq.~(\ref{isingeqquasi}) is equivalent to the objective $\tilde f$.
\subsection{Delayed Decision of Variables with One Dependency}\label{pipe4}
In Eq.~(\ref{isingeqquasi}), some variables only exist in one quadratic term and thus we call these variables with one dependency. The values of these variables can be decided when the values of other variables have been determined for minimizing the objective. Concretely, we can consider the one dependency terms as constraints with which we minimize the remaining part of the objective; i.e., the original cost without one dependency terms.
By doing so, we may have some new one dependency variables such that the corresponding terms should also be considered as constraints. The above procedure can be repeated until the objective involves no one dependency terms.
The notion of variable with one dependency and a theorem regarding delayed decision of variables with one dependency are stated below.

\begin{definition}[Variable with One Dependency]
A variable $z_i\in\mathbb{\bar B}$ in Eq.~(\ref{isingeqquasi}) is said to be a variable with one dependency if $z_i$ only exists in one quadratic term of Eq.~(\ref{isingeqquasi}) or $z_i$ exists in two quadratic terms in Eq.~(\ref{isingeqquasi}) which include one quadratic term with the other variable with one dependency.
\end{definition}

\begin{thm}
Suppose we derive a simplified $v$, denoted as $\check{v}$, with delayed decision of all variables with one dependency, the formula holds,
\begin{align}
\begin{split}
\begin{array}{l}
\qquad \quad\;\;\,\mathop{\min}\;\;\;
    v(z_1,z_2,\ldots,z_{P+U})\\
\\\iff
\begin{array}{ll}
    \;\,\,\mathop{\min} &v(z_1,z_2,\ldots,z_{P+U})  \\
    \;\,\,\textup{s.t.}
    &z_{j_1}=\mathop{\arg\min} \eta_{i_1j_1}z_{i_1}z_{j_1},\\
    &~~~~~~~~~~~~i_1\in [P+U]\backslash\{j_1\}+\varnothing,\\
    &\qquad\quad\vdots\\
    &z_{j_{B}}=\mathop{\arg\min} \eta_{i_{B}j_{B}}z_{i_{B}}z_{j_{B}},\\
    &~~~~~~~~~~~~i_{B}\in[P+U]\backslash\{j_{B}\}+\varnothing,
\end{array} \\\iff
\quad\,\mathop{\min}\;\;\;
\check{v}(z_{q_1},z_{q_2},\ldots,z_{q_{D}}),D= P+U-B
\end{array}
\end{split}
\end{align}
where $z_{j_1},\ldots,z_{j_B}$ are variables with one dependency in $z_1,z_2,\cdots,z_{P+U}$, $\eta_{ij}$ is the coefficient of term $z_iz_j$, and $z_{q_1},z_{q_2},\ldots,z_{q_{D}}$ are the remaining variables. Here, $i_{\zeta}\in\varnothing$ for first-order terms in $v$, so $\eta_{i_{\zeta}j_{\zeta}}z_{i_{\zeta}}z_{j_{\zeta}}$ is equivalent to $\eta_{j_{\zeta}}z_{j_{\zeta}}$.
\end{thm}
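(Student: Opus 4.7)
The plan is to establish the two equivalences in turn, treating the second as an iterative elimination that follows the chain structure built into Definition~2. For the first equivalence, I would fix any one-dependency variable $z_{j_\zeta}$ and decompose $v = v_{\setminus j_\zeta} + t_\zeta$, where $t_\zeta$ collects the terms in which $z_{j_\zeta}$ appears --- its unique quadratic coupling $\eta_{i_\zeta j_\zeta} z_{i_\zeta} z_{j_\zeta}$ (or the purely linear $\eta_{j_\zeta} z_{j_\zeta}$ indicated by $i_\zeta = \varnothing$), together with, in the chained case, a second quadratic term linking $z_{j_\zeta}$ to another one-dependency variable --- while $v_{\setminus j_\zeta}$ is independent of $z_{j_\zeta}$. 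Because $z_{j_\zeta}$ does not appear in $v_{\setminus j_\zeta}$, the joint minimum splits as
\begin{equation*}
\min_{z} v \;=\; \min_{z \,\setminus\, z_{j_\zeta}} \Bigl(v_{\setminus j_\zeta} + \min_{z_{j_\zeta}} t_\zeta\Bigr),
\end{equation*}
so the assignment $z_{j_\zeta} = \arg\min_{z_{j_\zeta}} t_\zeta$ is simultaneously optimal for every fixed value of the remaining variables, which is exactly the constraint in the middle formulation. Applying this separation for all $\zeta = 1,\ldots,B$, in an order that first peels off the ``leaves'' of the chain, yields the constrained form.

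For the second equivalence I would argue by induction on $B$. The base case is a leaf one-dependency variable $z_{j_1}$ sitting in a single quadratic term $\eta_{i_1 j_1} z_{i_1} z_{j_1}$: direct enumeration over $z_{j_1} \in \{-1,+1\}$ yields $\min_{z_{j_1}} \eta_{i_1 j_1} z_{i_1} z_{j_1} = -|\eta_{i_1 j_1}|$, a pure constant that is absorbed into the reduced polynomial. When an accompanying linear term $\eta_{j_1} z_{j_1}$ is present, the minimised contribution becomes $-|\eta_{i_1 j_1} z_{i_1} + \eta_{j_1}|$, and because $z_{i_1} \in \{-1,+1\}$ this two-valued quantity can always be rewritten as an affine function $a + b z_{i_1}$ and merged into the existing linear coefficient of $z_{i_1}$. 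Either way, eliminating $z_{j_1}$ leaves an Ising-type polynomial on $P+U-1$ variables, to which the induction hypothesis applies; after $B$ such steps the remaining polynomial is $\check v(z_{q_1},\ldots,z_{q_D})$ with $D = P+U-B$.

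The main obstacle I anticipate is the second clause of Definition~2, where $z_{j_\zeta}$ participates in two quadratic terms --- one of them shared with another one-dependency variable --- so its $t_\zeta$ still contains a free second neighbour and the separability step above does not apply to $z_{j_\zeta}$ directly. The remedy is to order the eliminations carefully: build a dependency graph whose edges are the chained couplings between one-dependency variables, and process it in reverse topological order, so that whenever $z_{j_\zeta}$ is eliminated, its chain partner has already been absorbed and $t_\zeta$ has collapsed to the single-partner form handled above. Making this ordering rigorous, and choosing an induction parameter (for instance, the number of edges in the one-dependency subgraph, which strictly decreases at each elimination step) is where the real care lies in what is otherwise a straightforward variable-by-variable reduction.
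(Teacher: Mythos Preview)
Your proposal is correct but proceeds by a genuinely different decomposition than the paper's. The paper proves the first equivalence with a single global split $v=v_1+v_2$, where $v_1$ collects \emph{all} terms containing any one-dependency variable and $v_2$ the rest; it first notes $\min v_1+\min v_2\le \min v$ (separate minimisation has at least as much freedom), and then argues equality by fixing the common variables $Z_c$ at a minimiser of $v_2$ and observing that each term of $v_1$ can still be driven to its own minimum by the one-dependency variable it contains. The second equivalence is then dismissed as ``straightforward.'' Your argument is instead an iterative leaf-peeling: you remove one $z_{j_\zeta}$ at a time via $\min_z v=\min_{z\setminus z_{j_\zeta}}\bigl(v_{\setminus j_\zeta}+\min_{z_{j_\zeta}}t_\zeta\bigr)$, and run an explicit induction on $B$ to verify that the reduced polynomial stays in Ising form.

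What the two routes buy: the paper's bulk split is shorter and needs no ordering discussion, but it is hand-wavy precisely at the chain clause of Definition~2---the sentence ``each term in $v_1$ can always take the minimum value'' tacitly assumes term-by-term independence that, for a chain such as $z_3z_4+z_4z_5$, only holds once one has (implicitly) assigned $z_5$ before $z_4$. Your leaf-first topological ordering makes that hidden step explicit, and your induction actually accounts for why the post-elimination object is again quadratic, including the affine rewriting $-\lvert\eta_{i_1j_1}z_{i_1}+\eta_{j_1}\rvert=a+bz_{i_1}$ when a linear term accompanies the single coupling---a point the paper simply asserts. Either argument suffices; yours is the more careful of the two.
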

\begin{proof} The proof of first equivalence is divided into two parts. First, we prove that $\min v_1+\min v_2\leq\min v$ holds. Second, we show that the equality can always be achieved. $i)$ Split $v$ into two parts, namely $v=v_1+v_2$. $v_1=\eta_{i_1j_1}z_{i_1}z_{j_1}+\ldots+\eta_{i_{B}j_{B}}z_{i_{B}}z_{j_{B}}$ contains all the terms that involve variables with one dependency, while $v_2=v-v_1$ contains the remaining terms. Define $Z_c \subseteq \{z_1,z_2,\ldots,z_{P+U}\}$ as the set of common variables whose elements appear in both $v_1$ and $v_2$.
Since $\forall z_c \in Z_c$ has to be assigned with the same value in $v$ but can be assigned with different values in $v_1$ and $v_2$ for respective minimizers, $\min v_1+\min v_2\leq\min v$ holds. 
$ii)$ Suppose $v_2$ is minimized and $\forall z_c \in Z_c$ has been assigned with value. Then, by definition of variable with one dependency, for each term in $v_1$, the term can always take the minimum value based on the variable with one dependency no matter what values $z_c$ takes. This indicates that $v_1$ can always take the minimum value no matter what values $z_c$ take. Thus, common variables of respective minimizer of $v_1$ and $v_2$ can share the same value, which leads to $\min v_1+\min v_2=\min v$. This accomplishes the proof of the first equivalence. The second equivalence is straightforward, where we rewrite the objective by removing variables with one dependency.
\end{proof}
\begin{example}
Consider $v(z_1,z_2,z_3,z_4,z_5)=z_1z_2+z_2z_3+z_3z_1+z_3z_4+z_4z_5+z_2$. Since $z_5$ only exists in one quadratic term and $z_4$ exists in two terms one of which includes $z_5$, by definition of variable with one dependency, both $z_4$ and $z_5$ are variables with one dependency. Hence, $\min v(z_1,z_2,z_3,z_4,z_5)=\min \check{v}(z_1,z_2,z_3)$, with $z_4=\arg\min z_3z_4$ and $z_5=\arg\min z_4z_5$.
\end{example}

The removal of variables with one dependency simplifies the Ising model. To be simple and vivid, we call delayed decision of variables with one dependency as elimination of chains afterwards. Ultimately,
$\check{v}$ can be expressed as,
\begin{equation}
    \check{v}(z_{q_1},z_{q_2},\ldots,z_{q_D})=\sum_{1\leq i<j\leq D}w_{ij}z_{q_i}z_{q_j} + \sum_{1\leq k \leq D}w_kz_{q_k}.
    \label{isingvcheck}
\end{equation}
To summarize, a general pseudo-Boolean problem $\min f$ in Eq.~(\ref{gpb}) is reduced into minimization of a simplified Ising model $\min \check{v}$ in Eq.~(\ref{isingvcheck}) via three procedures, i.e., elimination of uncoupled variables, quadratization and elimination of chains. We will find in the following section that these operations simplify QAOA and benefit distributed QAOA.

\section{Conventional QAOA}
In the above section, we have converted a common pseudo-Boolean problem into an Ising model which can be efficiently solved by QAOA. This is because QAOA combines efficient exploration of energy landscapes for non-convex optimization problems in quantum annealing~\cite{b--1} and error mitigation in variational quantum algorithms~\cite{b--2}. In this section, we will briefly review the conventional QAOA.

Since QAOA solves an optimization problem on a quantum system, it is necessary to map the objective of an optimization problem into the Hamiltonian of the quantum system. In conventional QAOA, Eq.~(\ref{isingvcheck}) is converted into a Hamiltonian
\begin{equation}
    H = \sum_{1 \leq i < j \leq D }\omega_{ij}\sigma_{q_i}^z\sigma_{q_j}^z+\sum_{1 \leq k \leq D}\omega_{k}\sigma_{q_k}^z,
    \label{eq:Hamiltonian}
\end{equation}
where the Hamiltonian is formulated by substituting Boolean variables $z_{q_i}z_{q_j}$ and $z_{q_k}$ with matrices $\sigma_{q_i}^z\sigma_{q_j}^z$ and  $\sigma_{q_k}^z$, respectively.
The Pauli-$z$ matrix for the $\theta$-th qubit $\sigma^z_{q_\theta}=I_1\otimes\ldots\otimes I_{\theta-1} \otimes \sigma^z \otimes I_{\theta+1}\otimes\ldots\otimes I_D$ with $\sigma^z = \begin{pmatrix} 1 & 0 \\ 0 & -1 \end{pmatrix}$ and identity matrix $I_j=\begin{pmatrix} 1 & 0 \\ 0 & 1 \end{pmatrix},\,j\in[D]\backslash\{\theta\}$. Here, $D$ denotes the total number of qubits.

A diagram of conventional QAOA is given in Fig.~\ref{fig:qaoa}, which sketches the procedure of QAOA. On a quantum device, traditional solutions to the original optimization problem are encoded as quantum states. In particular, the optimal solution $z^*\in \mathbb{\bar B}^D$ is encoded as the ground state of Hamiltonian.
QAOA approximates this ground state by an ansatz state~\cite{b11},
\begin{equation}
    |\psi(\vec{\gamma},\vec{\beta})\rangle=U_B(\beta_p)U_C(\gamma_p)\cdots U_B(\beta_1)U_C(\gamma_1)|\psi_0\rangle.
\end{equation}
The initial state is expressed as $|\psi_0\rangle=|+\rangle^D=|+\rangle_D\ldots|+\rangle_1$, where the superposition state is $|+\rangle=(|0\rangle+|1\rangle)/\sqrt{2}$ with the ground state $|0\rangle=(1,0)^T$ and the excited state $|1\rangle=(0,1)^T$ for each qubit. $p$ is the depth of the ansatz state.
Two kinds of operators are alternatively utilized by $p$ times to approach the ground state. One operator is the phase separating operator $U_B(\beta_j)=e^{-i\beta_j \sum_{\theta=1}^D\sigma^x_{q_\theta}},j\in[p]$ which alters the phase of potentially good quantum states, and the other is the mixing operator $U_C(\gamma_j)=e^{-i\gamma_{j}H},j\in[p]$ which increases the probability or weight of these good quantum states in a superposition state based on phases of quantum states. Here, the Pauli-$x$ matrix for the $\theta$-th qubit $\sigma^x_{q_\theta}=I_1\otimes\ldots\otimes I_{\theta-1} \otimes \sigma^x \otimes I_{\theta+1}\otimes\ldots\otimes I_D$ with matrix $\sigma^x = \begin{pmatrix} 0 & 1 \\ 1 & 0 \end{pmatrix}$. In the ansatz, two vectors of hyperparameters $\vec{\gamma} \in [0,\pi)^{\otimes p}$ and $\vec{\beta} \in [0,2\pi)^{\otimes p}$ are iteratively updated by a classical optimizer until a terminating condition is satisfied, such as the value of the objective function is less than a required threshold. The final ansatz state reads
\begin{equation}
    |\psi(\vec{\gamma}^*,\vec{\beta}^*)\rangle=\arg\min\langle \psi(\vec{\gamma},\vec{\beta})|H|\psi(\vec{\gamma},\vec{\beta}) \rangle\approx|z^*\rangle,
\end{equation}
where $\vec{\gamma}^*$ and $\vec{\beta}^*$ are the optimized vectors of hyperparameters.
Then, the final ansatz state undergoes quantum measurement and collapses into an eigenstate of $H$, most possibly a good approximation of $H$'s ground state, after which the promising eigenstate is decoded into a classical solution which is an approximate minimizer of the objective function $\check{v}$ in Eq.~(\ref{isingvcheck}).

\begin{figure}[htbp]
    \centering
    \includegraphics[width=0.48\textwidth]{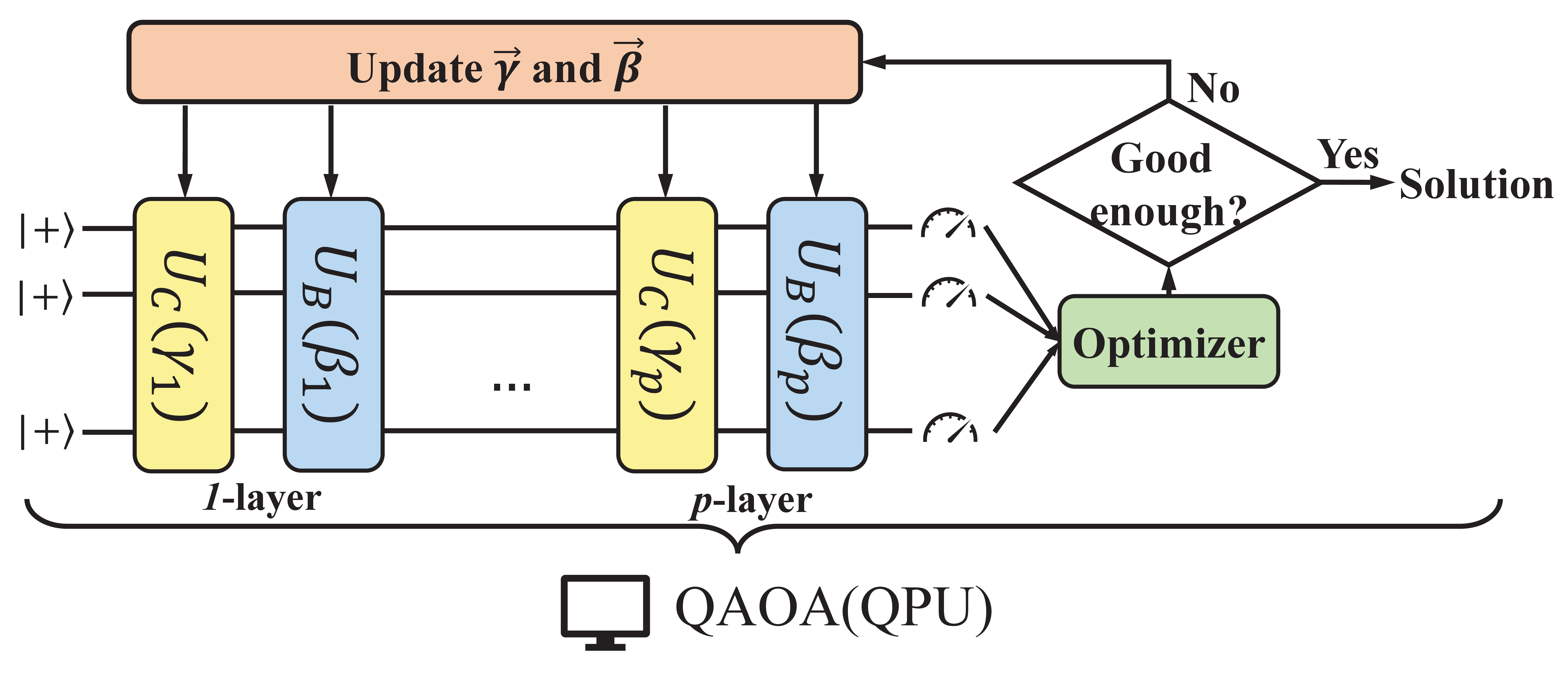}
    \caption{Sketch of conventional QAOA.}
    \label{fig:qaoa}
\end{figure}


\section{Distributed QAOA}\label{3}
Due to restricted qubit count and limited coherence time of NISQ devices, it is not feasible to solve large-scale pseudo-Boolean problems in a centralized quantum device. An alternative way is to explore distributed QAOA.
Existing methods partition an original graph into several subgraphs through random partition or greedy modularity maximization algorithm. The solution of each subgraph is obtained by conventional QAOA, after which each subgraph is viewed as a single vertex and the global solution is obtained by directly merging these local solutions, which is again a problem that can be solved by QAOA. However, the two graph partition techniques tend to produce sparse or large subgraphs that can degrade distributed QAOA. Besides, this brute combination of solutions for each subgraph despise connections (weighted edges) between subgraphs and there are conditions when the weights of inter edges outweigh that of inner edges, resulting in deviation from the global optimum. Finally, existing methods do not take into account the limited qubit count when merging local solutions.

With respect to the above points, we present a framework of distributed QAOA, where we tackle the above drawbacks by community detection and accommodation, community representation and update in Section~\ref{31}, and~\ref{32}, respectively. For the sake of clarity and intuition, we use a graph $G$ to represent the simplified Ising model in Eq.~(\ref{isingvcheck}) which contains both quadratic terms and first-order terms. In our setting, each variable is assigned with a node. Each quadratic term in Eq.~(\ref{isingvcheck}) are represented in the graph as two nodes and an edge connecting them, where the weight of the edge is the coefficient of the quadratic term. Hence, we have $G=(V,E)$ where $V$ is the set of all nodes and $E$ is the set of all edges in the graph. In addition, each first-order term in Eq.~(\ref{isingvcheck}) is bonded to the node in the graph according to the variable. Each node takes value from either $-1$ or $+1$, after which the value of in Eq.~(\ref{isingvcheck}) is obtained. With the graphic representation of the problem,
Fig.~\ref{fig:workflow} presents the basic working procedure of distributed QAOA, which generates a global solution to the original graph, $i)$ the community detection technique efficiently and effectively partitions the graph into subgraphs where QAOA is employed to obtain local solutions; $ii)$ community representation iteratively compresses and merges subgraphs at different levels. Subsequently, higher-level heuristics update the local solutions of these subgraphs until a global solution is achieved. This hierarchical approach enhances global optimization by providing global heuristics to local solutions, enabling a well-approximated global optimum.


\begin{figure*}[ht]
    \centering
    \includegraphics[width=0.95\textwidth]{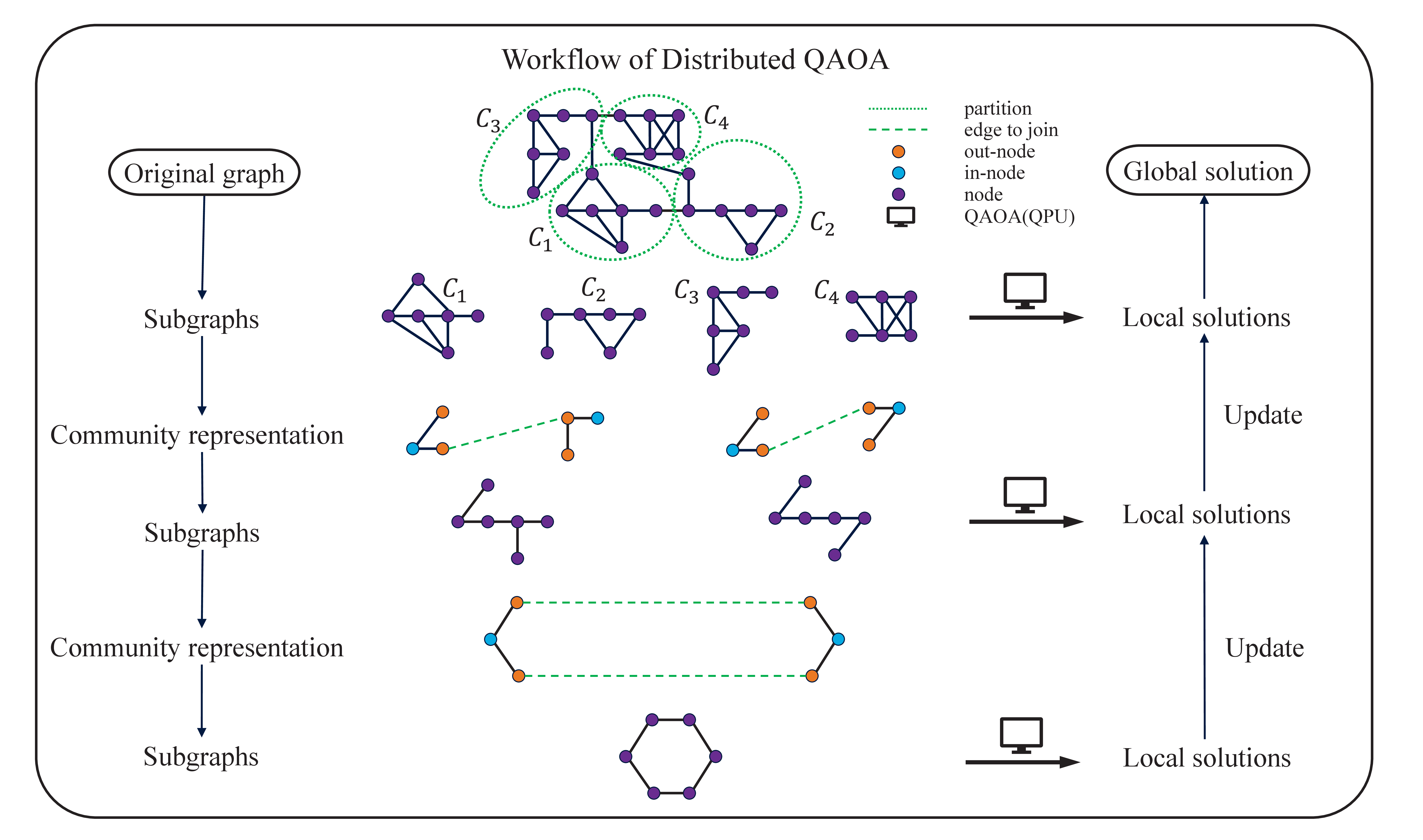}
    \caption{Workflow of distributed QAOA. The workflow is tree-structured. An example of workflow of distributed QAOA concerning $24$-node Max-Cut with allowed maximum number of $6$ qubits is given. The original graph is displayed in the upper middle. Community representation is done within each layer from bottom to top of the tree structure, after which update is done from top to bottom. The computer icon denotes the current graph undergoes QAOA on a local quantum computer. The legend is on the left of upper right corner.}
    \label{fig:workflow}
\end{figure*}




\subsection{Community Detection and Accommodation}\label{31}
 Community detection aims to partition the original graph into several subgraphs so that every subgraph can be solved by qubit-limited quantum devices. Suppose the original graph $G$ is partitioned into $M$ subgraphs (or communities) $\{G_k\}^M_{k=1}$, satisfying $V = \bigcup_{k=1}^{M} C_k$ with $\bigcap_{k=1}^{M} C_k = \varnothing$ where $C_k$ is the vertex set of subgraph $G_k$. Define $c_i$ as the subgraph that vertex $i$ belongs, and the indicator function $\delta(c_i,c_j)= \left\{
\begin{aligned}
    1, \;&c_i= c_j\\
    0, \;&c_i\neq c_j
\end{aligned} \right.$.
For edge $(i,j)$, if $\delta(c_i,c_j)=1$, edge $(i,j)$ belongs to the same subgraph; otherwise, edge $(i,j)$ belongs to adjacent subgraphs.

Existing methods subsume two partition techniques, which are random partition and greedy modularity maximization~\cite{b-15}. Random partition successively samples $q$ vertices as a subgraph until $\lceil N/q \rceil$ subgraphs are generated, where $q$ is the maximum allowed qubit count and $\lceil \cdot \rceil$ is the ceiling function. However, random partition degrades the performance when the graph is sparse and inner subgraphs contain
less edges when inter subgraphs contain
more edges. To measure the quality of partitioning~\cite{b12}, researchers define modularity as
\begin{equation}
    Q=\frac{1}{2m}\sum_{(i,j)}\left[\omega_{ij}-\frac{k_ik_j}{2m}\right]\delta(c_i,c_j),
    \label{eq:modularity}
\end{equation}
where $\omega_{ij}$ is the weight of edge $(i,j)$, the total sum of weight $m=\frac{1}{2}\sum_{(i,j)}\omega_{ij}$, and the sum of weight of the $i$-th node $k_i=\sum_{j}\omega_{ij}$. Intuitively, the term $\frac{k_ik_j}{2m}$ indicates the
probability of edge existing between the nodes $i$ and $j$ in a randomized graph. Typically, when $Q > 0.3$, it indicates a significant community
structure~\cite{b13}.
Greedy modularity maximization~\cite{b13}, recursively joins the pair of communities that increases the modularity most.
Nevertheless, this greedy algorithm often yields super-communities with a significant number of nodes, potentially exceeding the limit of $q$ and resulting in low modularity outcomes. Furthermore, recursively comparing modularity in a greedy manner imposes a substantial computational burden.


Different from the above methods, our method utilizes Louvain algorithm which embraces intrinsic multi-level nature, applicable to large networks with high modularity and is extremely fast~\cite{b14}. To save computation burden, Louvain algorithm defines the gain in modularity $\Delta Q$ by moving an isolated vertex $i$ into a community $c_j$ as,
\begin{align}
    \Delta Q &= \left[\frac{\sum_{p,q\in c_j}\omega_{pq}+\sum_{u\in c_j}\omega_{iu}}{2m}-\left(\frac{\sum_{u\in c_j}k_u+k_i}{2m}\right)^2\right] \nonumber\\
    &\quad - \left[\frac{\sum_{p,q\in c_j}\omega_{pq}}{2m}-\left(\frac{\sum_{u\in c_j}k_u}{2m}\right)^2-\left(\frac{k_i}{2m}\right)^2\right] \nonumber\\
    & = \frac{\sum_{u\in c_j}\omega_{iu}}{2m} - \frac{\sum_{u\in c_j}{k_u} k_i}{2m^2}.
\end{align}
The concomitant loss by moving vertex $i$ out of community $c_k$ is similarly calculated.
In practice, one evaluates the change in modularity by first removing vertex $i$ from its current community and subsequently relocating it to a neighboring community.
Therefore, the quantified impact of moving vertex $i$ from one community to another is attained for decision. The combination that brings the largest $\Delta Q$ is chosen for action until all $\Delta Q$ approaches zero. 
Subsequently, a compressed network is generated in which each vertex represents a community, facilitating further optimization of modularity.
In the compressed network, the weights of the links between the new nodes are determined by the summation of the weights of the links connecting nodes from the respective two communities. If there are links between nodes within the same community, these links result in self-loops within that community in the new network. The two steps are iterated until convergence of modularity or the number of nodes in each community reaches the allowed maximum number of qubits $q$.

\subsection{Community Representation and Update}\label{32}
Community representation compresses each subgraph for further combination. Based upon the results of community detection, within each community, the vertex sets $C_k$ are classified into two categories, namely $C^{in}_k$ and $C^{out}_k$, which satisfies $C^{in}_k \bigcup C^{out}_k = C_k$ and $C^{in}_k \bigcap C^{out}_k = \varnothing$. An in-node set $C^{in}_k$ is defined as the set of vertices whose neighborhoods are also within community $C_k$, while an out-node set $C^{out}_k$ is defined as the set of vertices whose neighborhoods have at least one vertex in the vertex set of another community. Mathematically, define $N(i)$ as the set that contains all the neighbourhoods of vertex $i$, then $C^{in}_k=\{i|N(i)\subseteq C_k\}$, and $C^{out}_k=\{i|\exists j\in N(i),\,j\notin C_k\}=C_k\backslash C^{in}_k$. In the paradigm of distributed QAOA, conventional QAOA is applied to obtain the local solution of subgraph $G_k$ composed of $C_k$ and edges that connect
these vertices, and we obtain a bitstring $z^*_k\in \mathbb{\bar B}^{|C_k|}$, indicating the approximate solution of $G_k$.
Each $C^{in}_k$ is represented by a new vertex $I_k$ whose classification $\{\mathbbm{-1},\mathbbm{+1}\}$ denotes $\overline{{z^{in}_k}}$ or ${z^{in}_k}$ of $C^{in}_k$ when combined into the global solution.
Here, $\overline{{z^{in}_k}}$ is defined as the result of flipping every bit of ${z^{in}_k}$, either from $+1$ to $-1$ or from $-1$ to $+1$.
$C^{out}_k$ and $I_k$, for $k=1,2,\ldots,M$, along with corresponding edges, compose a compressed substitute for subgraph $G_k$, where the weight of edge between a vertex in $C^{out}_k$ and $I_k$ is defined as $(-\check{v}_k(\overline{z^{in}_k}\oplus z^{out}_k)+\check{v}_k(z^{in}_k\oplus z^{out}_k))/2$ where $\check{v}_k$ is the local objective function of subgraph $G_k$. The expression means if the local solution is altered due to global connections the amount between before and after is added to $\check{v}_k$.

These compressed subgraphs are merged into a new sequence of subgraphs, where global update utilizes conventional QAOA again as the solver to obtain a higher-level solution, which updates $z^*_k$ into $z^*_{k,u}$ in every subgraph. $z^*_{k,u}$ outperforms $z^*_k$ since a higher-level objective function is taken into account.

Once an updated  $z^*_{k,u}$ is derived from higher-level, local update is performed after which there occur two circumstances. For each subgraph $G_k$, if $z^*_{k,u}=z^*_{k}$ or $z^*_{k,u}=\overline{z^*_{k}}$, then the procedure stops here, and $z^*_{k,u}$ is the local solution when combined into the higher-level solution; otherwise, we fix the updated $z^{out}_{k,u}$ to update $z^{in}_{k}$ into $z^{in}_{k,u}$, and $z^{in}_{k,u}\oplus z^{out}_{k,u}$ is the local solution when combined into the higher-level solution.
For simplicity of mathematics, in the following, we also denote $z^{*}_{k,u}$ in the first circumstance as $(z^{in}_{k,u}\oplus z^{out}_{k,u})$, where $z^{in}_{k,u}$ and $z^{out}_{k,u}$ are truncated bitstring of $z^*_{k,u}$ as to $C^{in}_k$ and $C^{out}_{k}$. Finally, the approximate solution of the original graph is built up by $\oplus^M _{k=1}(z^{in}_{k,u}\oplus z^{out}_{k,u})$. The following theorem
shows that our method ensures the approximate solution of the original graph outperforms naive combination of approximate local solutions. Specifically, $i)$ indicates the importance of global update, $ii)$ manifests the necessity of local update after global update, and $iii)$ accomplishes the entire proof.
\begin{thm} \label{thm:1}Let the original Graph $G$ be partitioned into $M$ subgraphs $\{G_k\}^M_{k=1}$, let $\check{v}$ be the objective function of $G$ w.r.t. Eq.~(\ref{isingvcheck}), and $\check{v}_k$ be the objective function of $\{G_k\}$. To combine local solutions into a global solution, let $s=(s_1,s_2,\ldots,s_M)\in \{-\mathbbm{1},+\mathbbm{1}\}^M$ be the indicator of logic relationship for each vertex set $C_k$ of subgraph $G_k$ or each in-node set $C^{in}_k$ of subgraph $G_k$. Let $b_k$ be a bitstring describing a kind of logic relationship, if $s_k = -\mathbbm{1}$, $s_k(b_{k})=\overline{b_{k}}$; if $s_k = +\mathbbm{1}$, $s_k(b_{k})=b_{k}$. Three formulae hold,\\
$i)$ $\check{v}(\oplus^M_{k=1} z^*_{k,u})\leq \min_s\check{v}(\oplus^M_{k=1} s_k(z^*_{k}))$, \\
$ii)$ $\check{v}_k(z^{in}_{k,u}\oplus z^{out}_{k,u})\leq \min_{s_k}\check{v}_k(s_k(z^{in}_{k})\oplus z^{out}_{k,u})$, \\
$iii)$ $\check{v}(\oplus^M _{k=1}(z^{in}_{k,u}\oplus z^{out}_{k,u}))\leq \min_s\check{v}(\oplus^M_{k=1} s_k(z^*_{k}))$.
\end{thm}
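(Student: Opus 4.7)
The plan is to prove the three inequalities in order, with $iii)$ as a direct consequence of $i)$ and $ii)$, and rely throughout on the decomposition
\[
\check{v}(z) = \sum_{k=1}^M \check{v}_k(z_k) + \sum_{k<l}\check{v}_{kl}^{\mathrm{inter}}\!\bigl(z^{out}_k, z^{out}_l\bigr),
\]
where the inter-community contribution depends only on the out-node values, since by definition every neighbor of a node in $C^{in}_k$ lies inside $C_k$.

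For part $i)$, I would first observe that the higher-level QAOA operates on the compressed graph whose Ising objective, by construction of the $(C^{out}_k,I_k)$ edge weights, agrees with $\check{v}(\oplus_k z_k)$ on the restricted search space
\[
\mathcal{R}=\bigl\{\oplus_{k=1}^M(\sigma_k(z^{in}_k)\oplus\tau_k):\,\sigma_k\in\{\pm 1\},\,\tau_k\in\bar{\mathbb{B}}^{|C^{out}_k|}\bigr\},
\]
where $\sigma_k$ is encoded by the value of $I_k$ and $\tau_k$ by the chosen values of the out-nodes. Since the naive-combination configurations $\{\oplus_k s_k(z^*_k): s\in\{\pm 1\}^M\}$ lie inside $\mathcal{R}$ (they correspond to taking $\sigma_k=s_k$ and $\tau_k = s_k(z^{out}_k)$), minimizing the higher-level objective over $\mathcal{R}$ can only improve on minimizing over this subset. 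Taking $\oplus_k z^*_{k,u}\in\mathcal{R}$ as the minimizer produced by the higher-level QAOA then yields the claim.

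For part $ii)$, I would split into the two circumstances defined just before the theorem. In circumstance (a), $z^*_{k,u}\in\{z^*_k,\overline{z^*_k}\}$, hence $z^{in}_{k,u}=s_k(z^{in}_k)$ and $z^{out}_{k,u}=s_k(z^{out}_k)$ for some $s_k\in\{\pm 1\}$, and the inequality holds with equality for this particular $s_k$. In circumstance (b), the local update fixes $z^{out}_{k,u}$ and returns $z^{in}_{k,u}=\arg\min_{z^{in}}\check{v}_k(z^{in}\oplus z^{out}_{k,u})$, minimizing over the full Boolean cube, which a fortiori dominates the minimum over the two-element set $\{z^{in}_k,\overline{z^{in}_k}\}$ appearing on the right-hand side.

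For part $iii)$, I would chain the previous two inequalities using the decomposition above. Since $z^{out}$ is not touched by the local update, the inter-community term is identical for $\oplus_k z^*_{k,u}$ and $\oplus_k(z^{in}_{k,u}\oplus z^{out}_{k,u})$, so it suffices to compare the intra-community sums; part $ii)$ applied term-by-term (taking $s_k$ witnessing the global-update bitstring in circumstance (a), and exploiting the exact-minimization argument in circumstance (b)) yields $\sum_k\check{v}_k(z^{in}_{k,u}\oplus z^{out}_{k,u})\leq\sum_k\check{v}_k(z^*_{k,u})$, hence $\check{v}(\oplus_k(z^{in}_{k,u}\oplus z^{out}_{k,u}))\leq\check{v}(\oplus_k z^*_{k,u})$, and part $i)$ finishes the chain. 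The main obstacle will be part $i)$: one has to verify that the edge weight $(-\check{v}_k(\overline{z^{in}_k}\oplus z^{out}_k)+\check{v}_k(z^{in}_k\oplus z^{out}_k))/2$ between an out-node and $I_k$, combined with the inter-community edges inherited from $G$, really does reproduce $\check{v}$ on $\mathcal{R}$ up to an additive constant; and one must also implicitly assume that both the global and local QAOA subroutines return exact minimizers of their respective objectives, since otherwise an explicit approximation ratio would have to enter each inequality.
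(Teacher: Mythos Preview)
Your proposal follows essentially the same approach as the paper: a domain-inclusion argument for $i)$ (the naive combinations lie in the higher-level search space, so the higher-level minimizer does at least as well), an ``update only if it improves'' argument for $ii)$, and for $iii)$ the decomposition $\check{v}=\sum_k\check{v}_k+\check{v}_{\mathrm{inter}}$ together with the observation that $\check{v}_{\mathrm{inter}}$ depends only on the out-node values, so summing $ii)$ over $k$ and chaining with $i)$ finishes. Your write-up is in fact more careful than the paper's in two respects: you explicitly separate the two circumstances in $ii)$, and you flag the implicit assumption that the QAOA subroutines return exact minimizers---the paper simply writes ``QAOA approximates the optimal solution within a domain'' and proceeds as if this were exact.
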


\begin{proof}
i) Denote $\mathcal{D}$ as the feasible domain of solutions. Since each $z^*_k \in \mathcal{D}(z^*_k)$ can be truncated into the bitstring of in-node set $z^{in}_k$ and the bitstring of out-node set $z^{out}_k$, namely $z^*_k = z^{in}_k \oplus z^{out}_k$, each $z^*_k$ is within $\mathcal{D}(z^*_{k,u})$, namely $\mathcal{D}(z^*_k)\subseteq\mathcal{D}(z^*_{k,u})$, so is $\bigcup^M_{k=1}\mathcal{D}(z^*_k)\subseteq\bigcup^M_{k=1}\mathcal{D}(z^*_{k,u})$. Since QAOA approximates the optimal solution within a domain, $\oplus^M_{k=1} z^*_{k,u}$ is at least as good as $\oplus^M_{k=1} s_k(z^*_{k})$, which leads to $\check{v}(\oplus^M_{k=1} z^*_{k,u})\leq \min_s\check{v}(\oplus^M_{k=1} s_k(z^*_{k}))$. \\
ii) By definition, $z^{in}_{k,u}$ is updated from $z^{in}_{k}$ after $z^{out}_{k,u}$ is updated from $z^{out}_{k}$. When we fix $z^{out}_{k,u}$ in subgraph ${G_k}$, $z^{in}_{k}$ is updated into $z^{in}_{k}$ only when $z^{in}_{k,u}$ outperforms $z^{in}_{k}$ concerning minimization the local objective function $\check{v}_k$, which leads to $\check{v}_k(z^{in}_{k,u}\oplus z^{out}_{k,u})\leq \min_{s_k}\check{v}_k(s_k(z^{in}_{k})\oplus z^{out}_{k,u})$.\\
iii) Denote $\check{v}_{inter}$ as the objective function of edges between subgraphs. Aggregate ii) from $k=1$ to $M$, then add all the interconnection edges between different subgraphs, we get $\sum^M_{k=1}\check{v}_k(z^{in}_{k,u}\oplus z^{out}_{k,u})+\check{v}_{inter}(\oplus^M_{k=1}z^{out}_{k,u})\leq \sum^M_{k=1}\min_{s_k}\check{v}_k(s_k(z^{in}_{k})\oplus z^{out}_{k,u})+\check{v}_{inter}(\oplus^M_{k=1}z^{out}_{k,u})$. Note that the objective function $\check{v}$ can be divided into terms within the subgraph and between subgraphs. Represented by $\check{v}$, we have $\sum^M_{k=1}\check{v}_k(z^{in}_{k,u}\oplus z^{out}_{k,u})+\check{v}_{inter}(\oplus^M_{k=1}z^{out}_{k,u})=\check{v}(\oplus^M _{k=1}(z^{in}_{k,u}\oplus z^{out}_{k,u}))$ and $\sum^M_{k=1}\min_{s_k}\check{v}_k(s_k(z^{in}_{k})\oplus z^{out}_{k,u})+\check{v}_{inter}(\oplus^M_{k=1}z^{out}_{k,u})=\min_s\check{v}(\oplus^M_{k=1}s_k(z^{in}_{k})\oplus z^{out}_{k,u})$, which leads to $\check{v}(\oplus^M _{k=1}(z^{in}_{k,u}\oplus z^{out}_{k,u}))\leq \min_s\check{v}(\oplus^M _{k=1}(s_k(z^{in}_{k})\oplus z^{out}_{k,u}))$. Since the global update does not change the logic relationship within $C^{in}_k$, along with i), $\min_s\check{v}(\oplus^M _{k=1}(s_k(z^{in}_{k})\oplus z^{out}_{k,u}))=\check{v}(\oplus^M_{k=1} z^*_{k,u})\leq \min_s\check{v}(\oplus^M_{k=1} s_k(z^*_{k}))$.
\end{proof}
\begin{example}
For the sake of conciseness, we use a 9-node Max-Cut problem as an example. The objective function is expressed as,
\begin{align}
    \min \check{v}(n_0,\cdots,n_8)=&\;n_0n_1+n_1n_2+n_0n_2+n_2n_3+n_3n_4\nonumber\\
    &+n_2n_4+n_5n_6+n_6n_8+n_7n_8+n_5n_7\nonumber\\
    &+2n_3n_5+2n_4n_6,
\end{align}
where $n_i \in \mathbb{\bar B}, i \in \{0,\cdots,8\}$.
Suppose quantum devices allow a maximum number of $6$ qubits. Also, the original graph $G$ is partitioned into $C_1$ and $C_2$, shown in Fig.~\ref{fig:example}. As to the existing method, in ascending order of numbers in nodes, local solutions are $z^*_1=\{(-1)(-1)(+1)(-1)(-1)\}$ and $z^*_2=\{(-1)(+1)(+1)(-1)\}$, and the indicator $s = (+\mathbbm{1},+\mathbbm{1})$. The global solution $s_1(z^*_1) \oplus s_2(z^*_2)$ yields the value of objective function $\check{v} = \check{v}_1 + \check{v}_2 + \check{v}_{inter} = -2 -4 + 0 = -6$. However, this is not the optimal case. As to our method, vertex set $C^{in}_1=\{0,1,2\}$ and $C^{out}_1=\{3,4\}$, $C^{in}_2=\{7,8\}$ and $C^{out}_2=\{5,6\}$. $C^{in}_1$ is suppressed as a single vertex, so is $C^{in}_2$. The combination of suppressed $C_1$ and $C_2$ is again solved by QAOA and generates $z^*_{1,u}$ and $z^*_{2,u}$ to update local solutions.
On one hand, $z^*_{1,u}=\{(-1)(-1)(+1)(-1)(+1)\}$ which neither equals to $z^*_1$ nor  $\overline{z^*_1}$, so $z^{in}_1=\{(-1)(-1)(+1)\}$ should be updated into $z^{in}_{1,u}=\{(+1)(-1)(+1)\}$. On the other hand, $z^*_{2,u}=\{(+1)(-1)(-1)(+1)\}$ which equals to $\overline{z^*_1}$, so there is no need to update $z^{in}_2$. The global solution yields $\check{v} = \check{v}_1 + \check{v}_2 + \check{v}_{inter} = -2 -4 - 4 = -10 < -6$, which is also the optimal solution. Fig.~\ref{fig.examplecom} demonstrates the solutions of the existing method and our method.
\end{example}
\begin{figure}[htbp]
    \centering
    \includegraphics[width=0.35\textwidth]{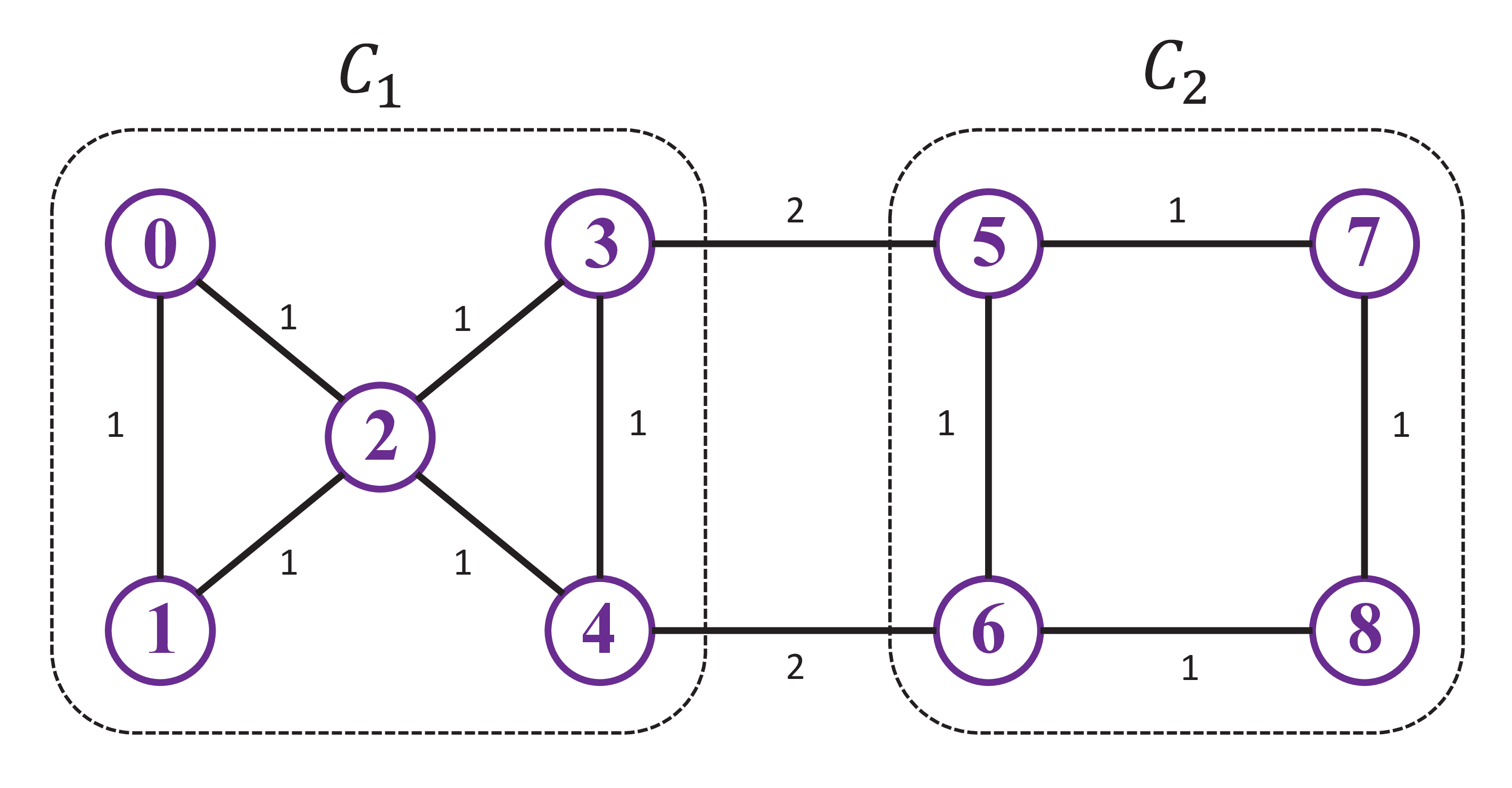}
    \caption{An example of Theorem~\ref{thm:1}. The graph is an undirected graph and the number beside the edge denotes weight.}
    \label{fig:example}
\end{figure}

\begin{figure}[htbp]
    \centering
    \subfigure[Cut derived by the existing method.]{
        \includegraphics[width=0.35\textwidth]{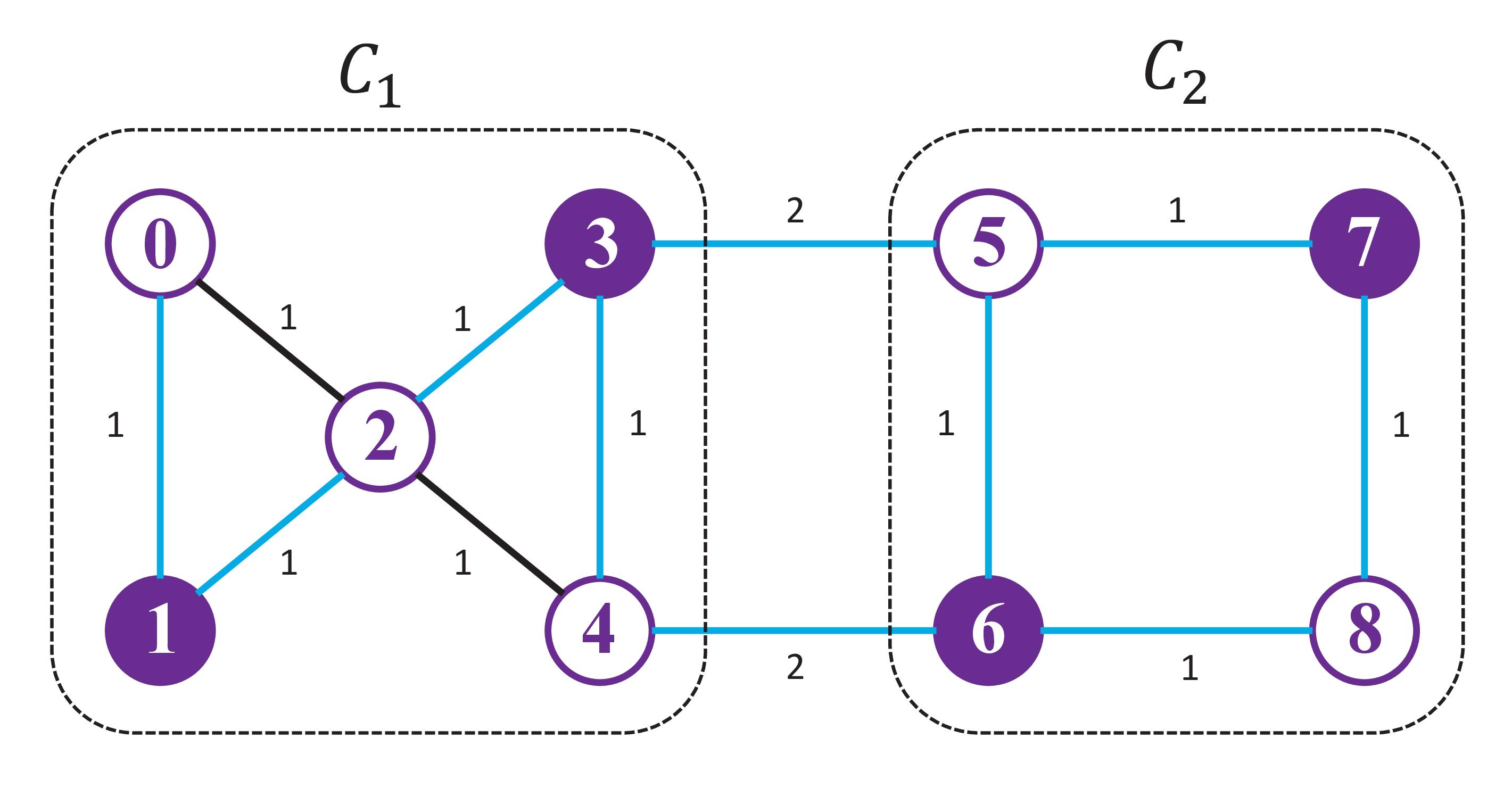}
    }
    \quad    
    \subfigure[Cut derived by our method.]{
    	\includegraphics[width=0.35\textwidth]{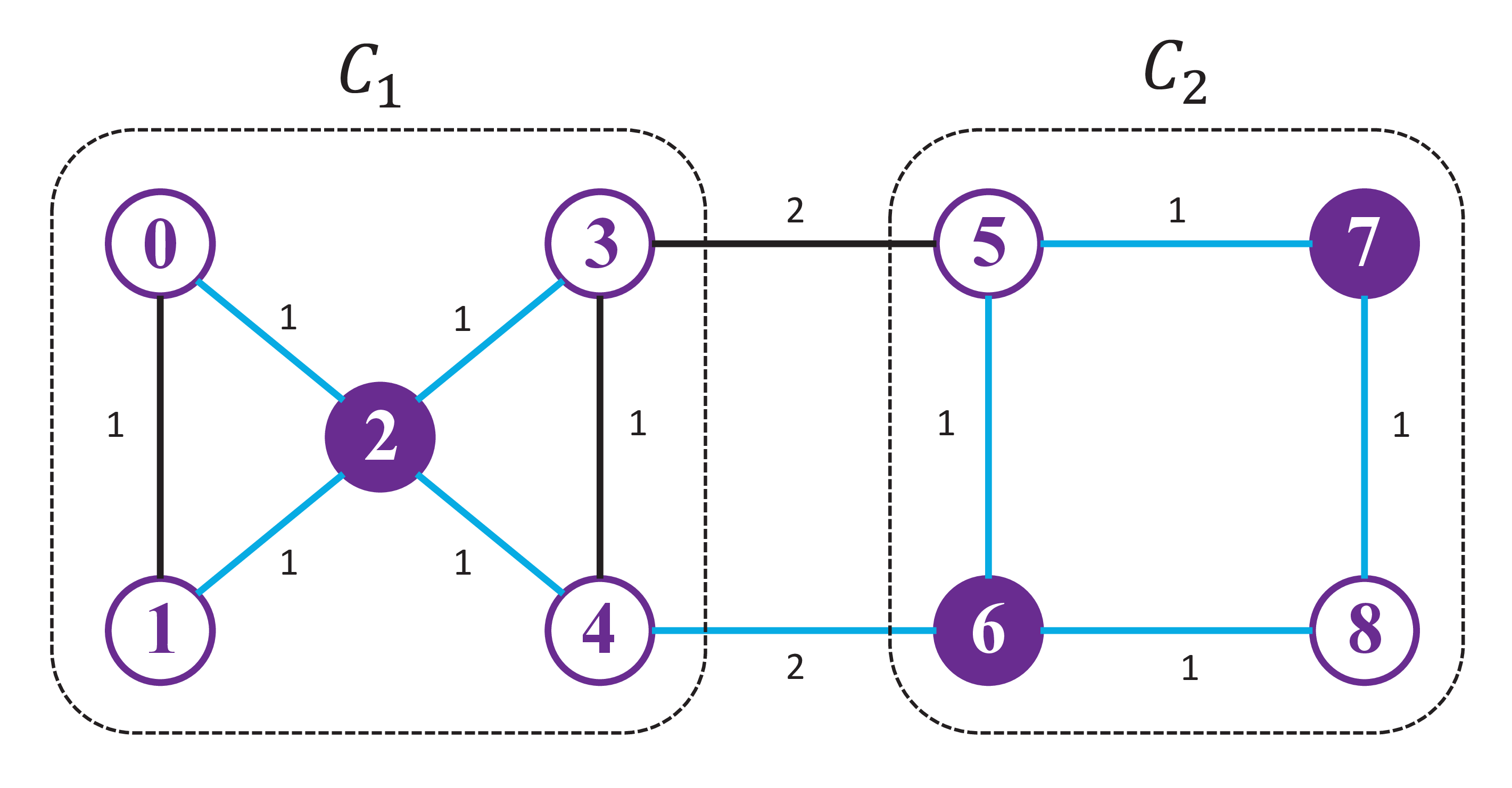}
    }
    \caption{Comparison of cut between the existing method and our method. The blue line denotes a cutting edge, the two vertices of which belong to different subsets of vertices.}
    \label{fig.examplecom}
\end{figure}

\begin{rem}
Different partitions of the original graph would lead to varied final values of $\check{v}$. For generality of comparison between the existing method and our method, we conduct $20000$ parallel experiments for both partition techniques. As for random partition, $4.9\%$ yields $-4$, $8.5\%$ yields $-6$, $41.9\%$ yields $-8$, and $44.7\%$ yields $-10$. Since each final value corresponds to several partitions, we offer one feasible partition for each value here for better understanding. Possible partitions are 
$\{(-1)(+1)(-1)(+1)(-1)(-1)(-1)(+1)(+1)\}$ for $-4$, $\{(-1)(+1)(+1)(-1)(+1)(+1)(-1)(+1)(+1)\}$ for $-6$, $\{(-1)(+1)(+1)(+1)(-1)(-1)(+1)(+1)(+1)\}$ for $-8$, and $\{(-1)(-1)(+1)(+1)(-1)(-1)(+1)(+1)(-1)\}$ for $-10$. As for greedy modularity maximization, $31.9\%$ yields $-6$, and $68.1\%$ yields $-10$. Possible partitions are $\{(-1)(-1)(+1)(-1)(-1)(+1)(-1)(-1)(+1)\}$ for $-6$ and $\{(+1)(-1)(-1)(+1)(-1)(-1)(+1)(+1)(-1)\}$ for $-10$. In contrast, our method yields $-10$ for $100\%$. Possible partition is $\{(+1)(-1)(+1)(-1)(+1)(+1)(-1)(-1)(+1)\}$.
\end{rem}


\subsection{Workflow of Distributed QAOA}\label{33}

Since existing methods directly merge local solutions into a global solution, they do not take into account the limited number of qubits, which is likely to be overtaken when the number of subgraphs is large. Hence, we propose a tree-structured bottom-up combination procedure. The entire workflow of distributed QAOA is presented in Algorithm~\ref{alg:qaoa}, i) the original graph $G$ is partitioned into $\{G^{(1)}_k\}_{k=1}^{M_1}$, where the superscript denotes the height of tree-structured combination process; ii) QAOA is applied to each subgraph $\{G^{(i)}_k\}$, which is further compressed as $\{SG^{(i)}_k\}$ by representing the in-node set as a new vertex; iii) each $\{SG^{(i)}_k\}$ then joins its adjacent compressed graph as long as the required number of qubits for the joined community is less than the allowed number of qubits $q$; iv) ii) and iii) are iterated until the number of communities equals to one. The update process is conducted from top to bottom. From the second layer to the bottom layer whose height equals to one, each ${z^{(i)}_k}^*$ is updated into ${z^{(i)}_{k,u}}^*$, after which the approximate global solution $z_G^*=\oplus^{M_1}_{k=1}{(z^{(1)}_{k,u}}^{in} \oplus {z^{(1)}_{k,u}}^{out})$.
\IncMargin{0.6em}
\begin{algorithm}[ht]
\SetKwInOut{Input}{Input}\SetKwInOut{Output}{Output}
\caption{Distributed QAOA for pseudo-Boolean problems}\label{alg:qaoa}
\Input{a multilinear polynomial $f$ of a pseudo-\\Boolean problem, maximum qubit count $q$}
\Output{approximate optimal solution $z_f^*$ of $f$}
\BlankLine
Eliminate uncoupled variables of $f\rightarrow\hat{f}$\;
Quadratize $\hat{f}$ and eliminate chains$ \; \rightarrow \check{v}$\;
Formulate $\check{v}$ as an Ising model $\rightarrow$ graph $G$\;

$G$ $\rightarrow$ $\{G^{(1)}_k\}_{k=1}^{M_1}$, $|C^{(1)}_k|\leq q$\\
$i=1$\\
\While(\tcp*[f]{combine local solutions}){$M_i\neq
1$}{\For{$k$ = $1,2,\ldots,M_i$}{
Apply QAOA to $\{G^{(i)}_k\}\rightarrow {z^{(i)}_k}^*$\;
Compress each $\{G^{(i)}_k\}$ as $\{SG^{(i)}_k$\}\;}
$\{SG^{(i)}_k\}^{M_1}_{k=1}\xrightarrow{join}\{G^{(i+1)}_k\}_{k=1}^{M_{i+1}}, |C^{(i)}_k| \leq q,M_i<M_{i+1}$\;
$i \leftarrow i+1$\;}
\While(\tcp*[f]{update local solutions}){$i \neq 1$}{
$i \leftarrow i-1$\;
\For{$k$ = $1,2,\ldots,M_i$}{
Derive ${z^{(i)}_{k,u}}^*$ from ${z^{(i+1)}_{k}}^*$\;
\eIf{${z^{(i)}_{k,u}}^*=={z^{(i)}_k}^*$ or ${z^{(i)}_{k,u}}^*==\overline{{z^{(i)}_k}^*}$}{Truncate ${z^{(i)}_{k,u}}^*$ into ${z^{(i)}_{k,u}}^{in} \oplus {z^{(i)}_{k,u}}^{out}$\;} {
Update ${z^{(i)}_{k,u}}^*$ into ${z^{(i)}_{k,u}}^{in} \oplus {z^{(i)}_{k,u}}^{out}$\;}
}}

$z_G^*=\oplus^{M_1}_{k=1}{(z^{(1)}_{k,u}}^{in} \oplus {z^{(1)}_{k,u}}^{out})$\;
Assign values to preprocessed variables, $z_G^* \rightarrow z_f^*$\;
\end{algorithm}
Note that our method is not only applicable to solving large Max-Cut instances, but is also capable of solving large pseudo-Boolean optimization problems. As mentioned above, an optimization problem is written as a pseudo-Boolean function, which is converted into a multilinear polynomial. Subsequently, through elimination of uncoupled variables, quadratization and elimination of chains, the initial problem is
reduced into a simplified Ising model. After the global problem is solved, these preprocessed variables are assigned with values to attain the solution to the initial optimization problem.

\section{Numerical Results}\label{4}
In this section, we conduct numerical simulation to verify the effectiveness and generality of distributed QAOA. Concretely, since the existing methods only focus on Max-Cut problems, we first focus on large-scale Max-Cut problems for fair comparison. Afterwards, we give an instance of our method solving a general pseudo-Boolean problem.
\subsection{Comparison on Max-Cut}
As for experimental settings, four kinds of graphs with different degrees of nodes and weights of edges are generated to make comprehensive comparison between distributed QAOA and the existing method. Specifically, we fix the node size to be $100$; an unweighted $9$-regular graph (whose degree of each node is $9$) is denoted as UR, while a weighted $9$-regular graph with weight of each edge uniformly sampled from $[1,6]$ is denoted as WR; an unweighted Erdos-Renyi (ER) graph (whose degree of node is randomly assigned) with an average degree of each node assigned as $5$ is denoted as UE, while a weighted ER graph with an average degree of each node assigned as $5$ and weight of each edge uniformly sampled from $[1,6]$ is denoted as WE. The maximum qubit count regarding a local quantum device, namely $q$, is configured as $10$, and $1$-layer QAOA with $20$ iterations are used for training local Max-Cut solvers.
We use the approximation ratio
\begin{equation}
    r = \frac{V_{max}-\langle \psi(\vec{\gamma}^*,\vec{\beta}^*)|H|\psi(\vec{\gamma}^*,\vec{\beta}^*) \rangle}{V_{max}-V_{min}}\in [0,1],
\end{equation}
as a measure of how close the final state is to the optimal solution, where $V_{max}$ and $V_{min}$ are the theoretical maximum and minimum values of the original objective function. Clearly, a larger $r$ indicates a better solution decoded by the final ansatz state $|\psi(\vec{\gamma}^*,\vec{\beta}^*)\rangle$.
$V_{min} = 0$ since there is no cut edge if all vertices are classified into the same group. $V_{max}$ is acquired by the cut value searched by Goemans-Williamson (GW) algorithm~\cite{b16} which is the most renowned classical Max-Cut solver that utilizes semi-definite programming (SDP)~\cite{b17}. Since the existing method already manifests its advantages over GW algorithm, we focus on the advantages of our method over the existing method in the simulation.

Compared with the existing method, our method makes advances in two main aspects, namely community detection and accommodation (CDA for short), as well as community representation and update (CRU for short). To validate the effectiveness of both aspects, we have done multiple ablation studies. In the following two figures, each box is plotted with $20$ data points from $20$ parallel experiments, where $1.5$ interquartile range (IQR), median and mean values (the colored numbers) are shown.

In Fig.~\ref{fig:sim1}(a), random partition is applied to the four graphs, and only CRU and no CDA is utilized in our method. In Fig.~\ref{fig:sim1}(b), greedy modularity maximization is applied to the four graphs, and only CRU and no CDA is utilized in our method. Our method outperforms the existing method in all instances, whether median or mean values are considered, which validates the effectiveness of CRU and verifies Theorem~\ref{thm:1} from the perspective of experiments. Specifically, in weighted regular graphs or weighted ER graphs, our method achieves significant superiority against the existing method, compared with unweighted counterparts. This is because CRU takes into account the connectivity between communities and updates local Max-Cut solvers with global heuristics. For both subfigures, regular graphs have lower approximation ratio than ER graphs in that regular graphs tend to have more connection between communities which degrades local Max-Cut solvers. Also, weighted graphs have lower approximation ratio than ER graphs due to weighted inter-community edges and more complicated Max-Cut problems for local Max-Cut solvers.

\begin{figure}[ht]
    \centering
    \includegraphics[width=0.48\textwidth]{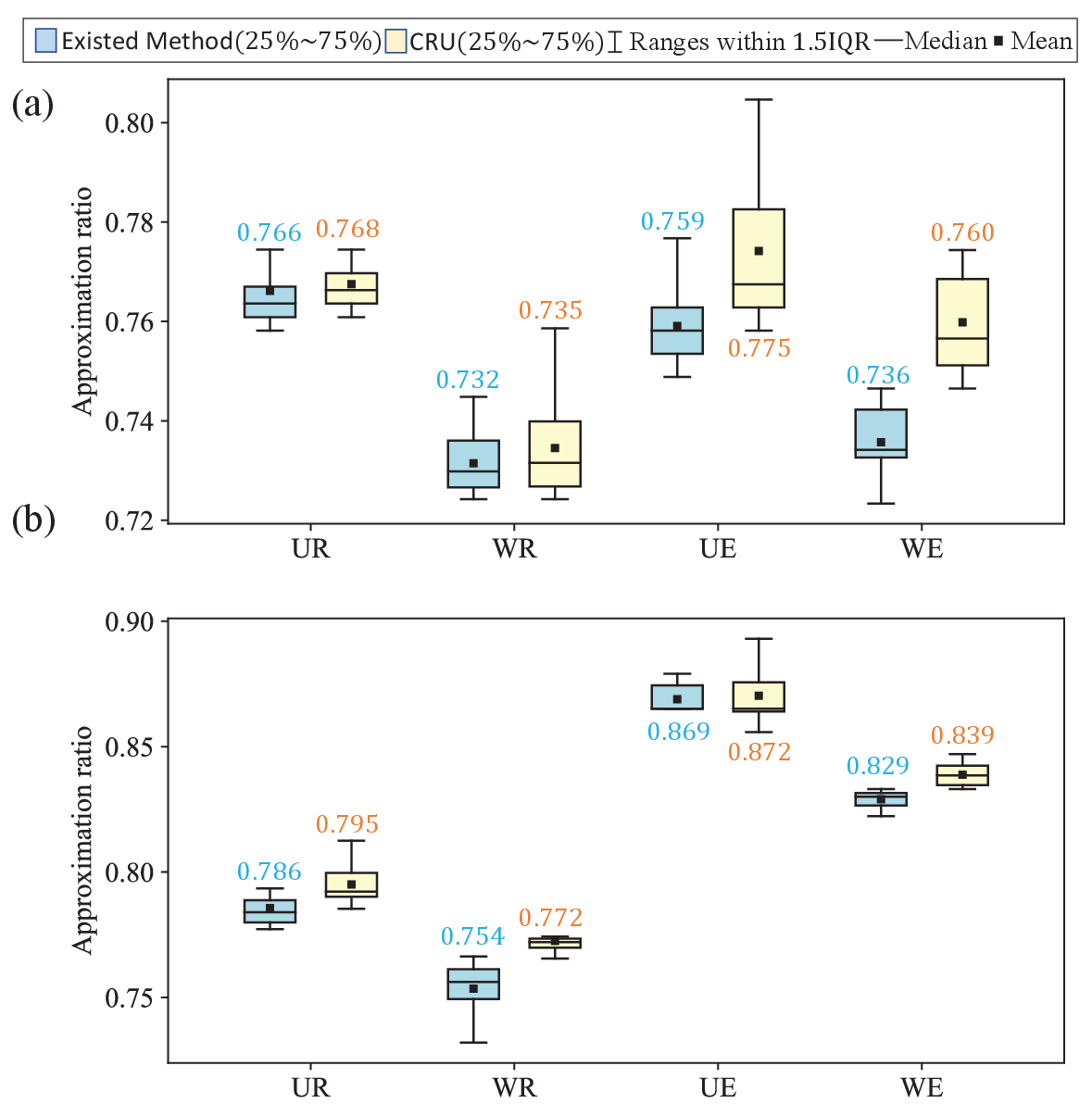}
    \caption{Comparison between the existing method and our method with only CRU. (a) uses random partition technique; (b) employs greedy modularity maximization technique.}
    \label{fig:sim1}
\end{figure}
In Fig.~\ref{fig:sim2}(a), Louvain algorithm is applied to the four graphs, so both CRU and CDA are utilized in out method. While similar results are demonstrated, the approximation ratio is higher than that in Fig.~\ref{fig:sim1}, which manifests the effectiveness of CDA. In Fig.~\ref{fig:sim2}(b), all three graph partition techniques, along with CRU are applied to the four graphs. Median values of our method in Fig.~\ref{fig:sim1}(a), (b) and Fig.~\ref{fig:sim2}(a), along with the modularity for the partitioned community are presented. This clarifies the importance of CDA that a higher level of modularity benefits the overall approximation ratio.
\begin{figure}[ht]
    \centering
    \hspace{-6.7mm}
    \includegraphics[width=0.50\textwidth]{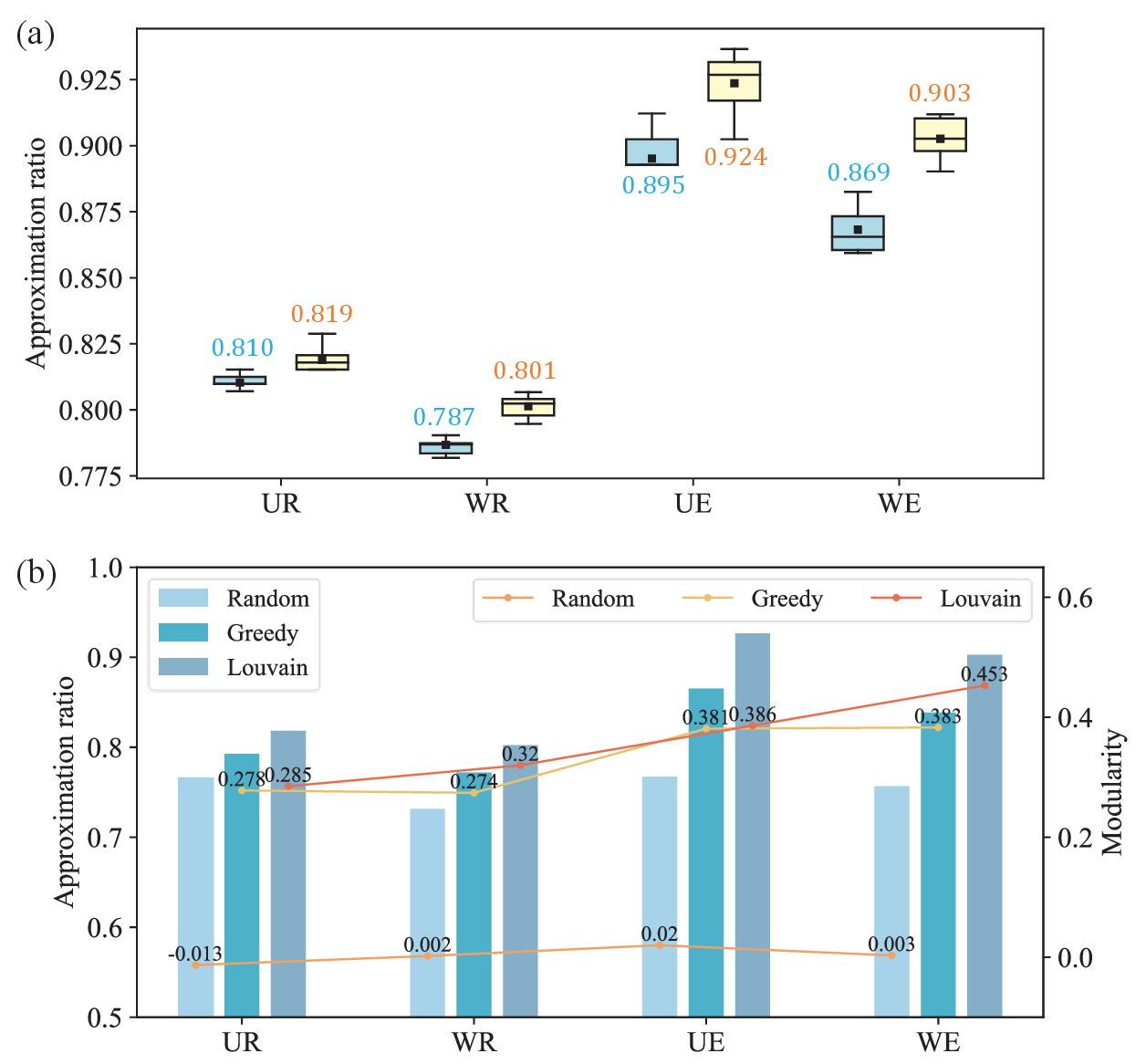}
    \caption{Comparison between the existing method and our method with only CRU, as well as comparison between three partition techniques. (a) uses Louvain algorithm technique (CDA); (b) uses CRU for three partition techniques. The legend is the same as Fig.~\ref{fig:sim1}.}
    \label{fig:sim2}
\end{figure}

\subsection{An Instance of Solving General pseudo-Boolean Problems}
The cubic knapsack problem (CKP) is a variant of the classical quadratic knapsack problem, and is a typical pseudo-Boolean problem. The goal is to select a subset of these items to maximize the total value that contains cubic terms while satisfying certain constraints.
Many problems can be modeled as CKP, including the satisfiability problem Max 3-SAT~\cite{943}, the selection problem~\cite{944}, and the network alignment problem~\cite{945}.
Here, we give an instance of solving a CKP, where we reduce the problem from a general pseudo-Boolean problem into a simplified Ising model, utilize distributed QAOA to solve the problem and reconstruct the solution to the original CKP.
\begin{equation}
\begin{array}{ll}
    \mathop{\max}
    &2x_1+5x_2+2x_3+2x_4-2x_7+8x_1x_2+6x_1x_3\\&+10x_1x_4+4x_1x_5+2x_2x_3+6x_2x_4+3x_2x_6\\&+4x_3x_4+4x_4x_7+7x_1x_3x_4\\
    \text{s.t.} &8x_1+6x_2+5x_3+3x_4\leq 16
\end{array}
\end{equation}
where $x_i \in \mathbb{B}, i \in [7]$ indicates whether to choose the $i$-th item or not.
Follow the procedure of reducing a general pseudo-Boolean problem into a simplified Ising model. First, the constrained problem is transformed into an unconstrained problem by adding a quadratic penalty to the negative counterpart of the objective function, which is $P_1(8x_1+6x_2+5x_3+3x_2+x_8+2x_9-16)^2$, where $P_1$ is a large positive number and $x_{8,9}\in\mathbb{B}$ are slack variables to convert an inequality constraint into a equality constraint. Second, $x_5$ and $x_6$ are uncoupled variables, which can both be assigned $1$ in the first place. The non-quadratic term $7x_1x_3x_4$ is quadratized by substituting itself with $7[x_4x_{10}-P_2(x_1x_3-2x_1x_{10}-2x_3x_{10}+3x_{10})]$ in the objective function to ensure $x_{10}=x_1x_3$, where $P_2$ is also a large positive number. Next, replace $x_\cdot$ with $\frac{1-z_\cdot}{2}$ to obtain the Ising formulation of the objective function. Since $z_7$ exists only in one term $z_4z_7$, which means $z_7$ can be assigned value depending on $z_4$, namely a delayed decision of $z_7$ satisfying $z_7 = z_4$. Finally, after we choose $P_1=P_2=10$ and neglect the constants and transform the objective function from $\max$ to $\min$, we derive the original graph for the problem, and the Hamiltonian (or Ising model) of the corresponding problem is
\begin{equation}
\check{v}(\mathbf{\hat{z}})=-\mathbf{\hat{z}^T\,M\,\hat{z}},
\end{equation}
where
\begin{gather*}
\mathbf{\hat{z}} = (z_1,z_2,z_3,z_4,z_8,z_9,z_{10})^T,\\
\mathbf{M} =
\begin{bmatrix}
          613 &238 &216   &117.5 &40   &80 &-35\\
            0  &436 &149.5  &88.5  &30   &60 &0\\
            0  &0    &393   &74    &25   &50 &-35\\
            0  &0    &0      &225.5 &15   &30  &-1.75\\
            0  &0    &0      &0      &70   &10  &0\\
            0  &0    &0      &0      &0     &140 &0\\
            0  &0    &0      &0      &0     &0     &-66.5\\
\end{bmatrix}.
\end{gather*}
We apply our method to $\check{v}$, where the original graph is divided into two subgraphs and is solved, respectively, by QAOA with $p=1$. The result is $\{z_1,\ldots, z_{10}\}^*=\{(-1)(+1)(-1)(-1)(-1)(-1)(-1)(+1)(+1)(+1)\}$. By reconstructing the solution to the original CKP problem, we derive the final value of the original objective function as $39$, with the optimal solution $(x_1\ldots x_7)^*=1011111$.

\section{Conclusion and Future Work}\label{5}
In this paper, we propose a distributed QAOA targeting at solving pseudo-Boolean problems. We have made two major contributions. First, we present a pipeline to compactly reduce a general pseudo-Boolean problem into a simplified Ising model via elimination of uncoupled variables, quadratization and elimination of chains, which benefits distributed QAOA. Second, we put forward distributed QAOA which refines community detection of existing methods, and innovatively propose community representation and update which brings global heuristics to local QAOA solvers. We prove that our method achieves a larger cut than the existing method, under the same partition of graph, in Theorem~\ref{thm:1}. Also, in our method, the qubit limitation is taken into account
when merging local solutions into an overall solution. Simulation results shows our method outweighs existing methods in multiple kinds of graphs. The ablation study validates the effectiveness of community detection and accommodation, and community representation and update. Our work paves the way for application of distributed QAOA on solving large-scale pseudo-Boolean problems on qubit-limited local quantum devices.

Future studies can further explore this issue by several aspects. First, running time of distributed local QAOA can be optimized. Since the running time of QAOA relies on the number of qubits actually used, local quantum devices could use relatively less qubits to save time. However, the number of quantum devices applied will increase. This requires an overall efficient allocation of partitioned graphs, which strikes a right balance  between running time and the total number of employed quantum devices. 
Second, QUBO model of other complicated problems can be devised to enable the problem to be solved by distributed QAOA.
Finally, to actually deploy such a distributed QAOA system, networked control system for distributed quantum devices are necessary. Proposals
addressing such control systems can be found in~\cite{941,942}.

\vspace{12pt}

\end{document}